\theoremstyle{plain}
\newtheorem{theorem}{Theorem}[section]
\theoremstyle{definition}
\theoremstyle{remark}
\journal{ArXiv.org}
\begin{document}

\begin{frontmatter}

\title{Compression and Distillation of Data Quadruplets in Non-intrusive Reduced-order Modeling}

\author[uz]{Umair~Zulfiqar}
\author[qs]{Qiu-Yan~Song\corref{mycorrespondingauthor}}
\cortext[mycorrespondingauthor]{Corresponding author}
\ead{qysong@shu.edu.cn}
\author[zx]{Zhi-Hua~Xiao}
\author[ld1,ld2]{Liang~Du}
\author[vs]{Victor~Sreeram}
\address[uz]{School of Electronic Information and Electrical Engineering, Yangtze University, Jingzhou, Hubei, 434023, China}
\address[qs]{Huangshi Key Laboratory of Metaverse and Virtual Simulation, School of Mathematics and Statistics, Hubei Normal University, Huangshi, Hubei 435002, China}
\address[zx]{School of Mathematics and Systems Science, Wuhan University of Science and Technology, Wuhan, Hubei 430065, China}
\address[ld1]{School of Computer and Information Technology, Shanxi University, Taiyuan 030006, China}
\address[ld2]{Key Laboratory of Evolutionary Science Intelligence of Shanxi Province, Shanxi University, Taiyuan 030006, Shanxi, China}
\address[vs]{Department of Electrical, Electronic, and Computer Engineering, The University of Western Australia, Perth, 6009, Australia}
\begin{abstract}
This paper introduces a quadrature-free, non-intrusive approach to balanced truncation. The method non-intrusively constructs reduced-order models using available transfer function samples from the right half of the $s$-plane. It is highlighted that the proposed non-intrusive balanced truncation and existing quadrature-based balanced truncation algorithms share a common feature: both compress their respective data quadruplets to derive reduced-order models. Additionally, it is shown that by using different compression strategies, these quadruplets can be used to develop three non-intrusive formulations of the Iterative Rational Krylov Algorithm (IRKA). These formulations non-intrusively generate reduced models using transfer function samples from the $j\omega$-axis or the right half of the $s$-plane, or impulse response samples. Notably, these IRKA formulations eliminate the necessity of computing new transfer function samples as IRKA iteratively updates the interpolation points. The non-intrusive algorithms developed in this paper are also extended to discrete-time systems. The efficacy of the proposed algorithms is validated through numerical examples, which show that the proposed non-intrusive approaches perform comparably to their intrusive counterparts.
\end{abstract}

\begin{keyword}
ADI\sep Balanced truncation\sep Data-driven\sep $\mathcal{H}_2$-optimal\sep IRKA\sep Non-intrusive
\end{keyword}

\end{frontmatter}

\section{Introduction}
Model order reduction (MOR) comprises system-theoretic methods aimed at constructing simplified models that accurately replicate the input-output behavior of large-scale dynamical systems. By efficiently capturing key dynamical characteristics of the original system, reduced order models (ROMs) are able to approximate its behavior across a broad range of inputs, yet are significantly lower in order. These ROMs are designed to be computationally efficient, making them easier to simulate, manipulate, and control. For further details on various MOR techniques, the readers are referred to \cite{antoulas2017tutorial,obinata2012model,antoulas2005approximation}.

Balanced truncation (BT) \cite{moore1981principal} is a highly effective and widely used technique for MOR of linear dynamical systems. This method preserves the asymptotic stability of the original system while offering \textit{a priori} error bounds for the ROM. By discarding states that are difficult to reach and observe, as determined by the relative magnitude of the system's Hankel singular values, BT ensures that their impact on the system's input-output behavior is minimal. Consequently, the ROM accurately approximates the original system in simulations or analyses.

The primary computational burden in BT lies in solving large-scale Lyapunov equations to compute the system Gramians. Various approaches, such as those mentioned in the surveys \cite{benner2013numerical,simoncini2016computational}, have been developed to efficiently compute these Gramians. These methods rely on the system's explicit state-space representation, making BT an ``\textit{intrusive}'' method. This is in contrast to ``\textit{non-intrusive}'' methods, which depend solely on system response data—like transfer function samples or impulse response measurements—without requiring the system's internal state-space representation \cite{mayo2007framework,nakatsukasa2018aaa,gosea2022data,scarciotti2024interconnection}. In \cite{goseaQuad}, a non-intrusive BT algorithm based on numerical integration, called Quadrature-BT (QuadBT), is introduced. This algorithm constructs the ROM using transfer function samples at the $j\omega$-axis of the $s$-plane or samples of impulse response and its derivatives.

The $\mathcal{H}_2$-optimal MOR problem involves finding a local minimum for the (squared) $\mathcal{H}_2$ norm of the error transfer function. One of the key methods for achieving this local optimum is the Iterative Rational Krylov Algorithm (IRKA) \cite{gugercin2008h_2}. A non-intrusive version of IRKA was introduced in \cite{beattie2012realization}, based on the interpolatory framework proposed in \cite{mayo2007framework}. This approach requires only transfer function samples and their derivatives to compute the local optimum, making it non-intrusive. However, because IRKA is iterative, the sampling points are updated at each iteration and cannot be predetermined. Instead, IRKA identifies the optimal sampling points through successive iterations. If these new samples must be estimated experimentally, the algorithm must pause until the new approximations of samples are available. This poses practical challenges, as it may be difficult or even impossible to conduct experiments to estimate transfer function samples each time the sampling points are updated.

Among the various $\mathcal{H}_2$ MOR algorithms, the Pseudo-optimal Rational Krylov (PORK) algorithm is an important suboptimal method \cite{wolf2013h,wolf2014h}. Unlike IRKA, PORK is an iteration-free approach that satisfies a subset of the $\mathcal{H}_2$ optimality conditions in a single run. In this paper, PORK plays a significant role in the development of the non-intrusive implementations of both BT and IRKA.

Over the past two decades, the low-rank Alternating-direction Implicit (ADI) method has proven highly effective in reducing the computational cost of BT \cite{benner2013efficient}. It is now one of the most widely used and efficient BT algorithms in the literature \cite{saak2012goal}. In this paper, we introduce a non-intrusive implementation of the low-rank ADI-based BT that constructs the ROM from transfer function samples in the right-half of the \(s\)-plane. Unlike QuadBT, this approach does not rely on numerical integration. Additionally, we propose three non-intrusive implementations for IRKA, tailored to the type of data available. In cases where transfer function samples along the \(j\omega\) axis or impulse response measurements are accessible, we present numerical integration-based algorithms that do not require new transfer function samples as IRKA updates the sampling points. For scenarios where transfer function samples in the right-half of the \(s\)-plane are available, we propose a version that does not require numerical integration and new transfer function samples as IRKA updates the sampling points. Additionally, all these non-intrusive implementations (both BT and IRKA) for continuous-time systems are extended to discrete-time systems in this paper.

The remainder of the paper is structured as follows. Section \ref{sec2} provides the necessary background on MOR and briefly reviews existing MOR algorithms most relevant to this work. The main contributions of this research begin in Section \ref{sec3}, where a non-intrusive implementation of ADI-based low-rank BT is proposed. Section \ref{sec4} presents three new non-intrusive implementations of IRKA, tailored to the type of available data. Section \ref{sec5} introduces two quadrature-based non-intruisve implementations of IRKA for discrete-time systems. In Section \ref{sec6}, the PORK algorithm is extended to discrete-time systems. Building on this, Section \ref{sec7} formulates a quadrature-free non-intrusive implementation of BT for discrete-time systems, while Section \ref{sec8} develops a quadrature-free non-intrusive implementation of IRKA. Section \ref{sec9} elaborates on the concepts of compression and distillation in the context of non-intrusive MOR. The performance of the proposed algorithms is evaluated in Section \ref{sec10}. Finally, the paper concludes in Section \ref{sec11}.
\section{Preliminaries}\label{sec2}
Consider an \( n^{th} \)-order linear time-invariant (LTI) system \( G(s) \) represented by the state-space realization  
\[
G(s) = C(sE - A)^{-1}B,
\]  
where \( E \in \mathbb{R}^{n \times n} \), \( A \in \mathbb{R}^{n \times n} \), \( B \in \mathbb{R}^{n \times m} \), and \( C \in \mathbb{R}^{p \times n} \). 

Suppose the \( r^{th} \)-order ROM \( G_r(s) \) is given by the state-space realization  
\[
G_r(s) = C_r(sE_r - A_r)^{-1}B_r,
\]  
where \( E_r \in \mathbb{R}^{r \times r} \), \( A_r \in \mathbb{R}^{r \times r} \), \( B_r \in \mathbb{R}^{r \times m} \), and \( C_r \in \mathbb{R}^{p \times r} \). Throughout the paper, $G(s)$ and $G_r(s)$ are assumed to be stable while the matrices $E$ and $E_r$ are assumed to be non-singular.

The ROM is derived from \( G(s) \) using Petrov-Galerkin projection, defined as  
\[
E_r = W^T E V, \quad A_r = W^T A V, \quad B_r = W^T B, \quad C_r = C V,
\]  
where \( W \in \mathbb{R}^{n \times r} \), \( V \in \mathbb{R}^{n \times r} \), and both \( V \) and \( W \) are full column rank matrices. Let $T_v \in \mathbb{C}^{r \times r}$ and $T_w \in \mathbb{C}^{r \times r}$ be invertible matrices. The projection matrices \( W \) and \( V \) can be substituted with \( WT_w \) and \( VT_v \), yielding the same ROM \( G_r(s) \) but with a different state-space realization. This property can be utilized to transform complex projection matrices and the resulting state-space matrices of the ROM into real-valued ones. For the sake of clarity and simplicity in presentation, we will assume \( V \), \( W \), \( E_r \), \( A_r \), \( B_r \), and \( C_r \) to be complex matrices throughout the remainder of the paper, without any loss of generality. Readers are referred to (Section 4.1 of) \cite{goseaQuad} for computing \(T_v\) and \(T_w\) to ensure that the ROMs obtained using the algorithms discussed in the following sections are real-valued.
\subsection{Review of Interpolation Theory \cite{beattie2017model}}
Let the right interpolation points be \((\sigma_1,\dots, \sigma_r)\) and the left interpolation points be \((\mu_1,, \dots, \mu_r)\), with their corresponding right tangential directions \((b_1, \dots, b_r)\) and left tangential directions \((c_1, \dots, c_r)\). The projection matrices \(V \in \mathbb{C}^{n \times r}\) and \(W \in \mathbb{C}^{n \times r}\) within the interpolation framework can be constructed as follows:
\begin{align}
V &= \begin{bmatrix} (\sigma_1 E - A)^{-1} B b_1 & \cdots & (\sigma_r E - A)^{-1} B b_r \end{bmatrix},\label{Kry_V}\\
W &= \begin{bmatrix} (\mu_1^* E^T - A^T)^{-1} C^T c_1^* & \cdots & (\mu_r^* E^T - A^T)^{-1} C^T c_r^* \end{bmatrix},\label{Kry_W}
\end{align}where $b_i\in\mathbb{C}^{m\times 1}$ and $c_i\in\mathbb{C}^{1\times p}$.

The ROM obtained using these projection matrices satisfies the following tangential interpolation conditions:
\begin{align}
G(\sigma_j) b_j = G_r(\sigma_j) b_j, \quad c_i G(\mu_i) = c_i G_r(\mu_i),\label{int_cond_1}
\end{align}
for \(i = 1, \dots, r\) and \(j = 1, \dots, r\). Additionally, if there are common right and left interpolation points, i.e., \(\sigma_j = \mu_i\), the following tangential Hermite interpolation conditions are also satisfied for those points:
\begin{align}
c_i G^\prime(\sigma_j) b_j = c_i G_r^\prime(\sigma_j) b_j.\label{int_cond_2}
\end{align}
\subsection{Iterative Rational Krylov Algorithm (IRKA) \cite{gugercin2008h_2}}
Assume that \( G(s) \) and \( G_r(s) \) have simple poles. In this case, they can be expressed in the following pole-residue form:
\[
G(s) = \sum_{k=1}^{n} \frac{l_k r_k^*}{s - \lambda_k}, \quad G_r(s) = \sum_{k=1}^{r} \frac{\hat{l}_k \hat{r}_k^*}{s - \hat{\lambda}_k}.
\]
The necessary conditions for a local optimum of \( ||G(s) - G_r(s)||_{\mathcal{H}_2}^2 \) are given by:
\begin{align}
\hat{l}_i^*G^{\prime}(-\hat{\lambda}_i)\hat{r}_i&=\hat{l}_i^*G_r^{\prime}(-\hat{\lambda}_i)\hat{r}_i,\label{op1}\\
\hat{l}_i^*G(-\hat{\lambda}_i)&=\hat{l}_i^*G_r(-\hat{\lambda}_i),\label{op2}\\
G(-\hat{\lambda}_i)\hat{r}_i&=G_r(-\hat{\lambda}_i)\hat{r}_i,\label{op3}
\end{align}for $i=1,2,\cdots,r$.

Since the ROM \( G_r(s) \) is initially unknown, IRKA uses fixed-point iterations starting from an arbitrary initial guess of the interpolation data to search for the local optimum. After each iteration, the interpolation data is updated as \( \sigma_i = \mu_i = -\hat{\lambda}_i \), \( b_i = \hat{r}_i \), and \( c_i = \hat{l}_i^* \) until convergence is achieved. Upon convergence, a local optimum of \( ||G(s) - G_r(s)||_{\mathcal{H}_2}^2 \) is achieved.
\subsection{Pseudo-optimal Rational Krylov (PORK) Algorithm \cite{wolf2014h}}
Let us define $S_b$, $S_c$, $L_b$, and $L_c$ as follows:
\begin{align}
S_b &= \text{diag}(\sigma_1, \dots, \sigma_r),& S_c &= \text{diag}(\mu_1, \dots, \mu_r),\nonumber\\
L_b &= \begin{bmatrix} b_1, \dots, b_r \end{bmatrix},& L_c^* &= \begin{bmatrix} c_1^*, \dots, c_r^* \end{bmatrix}.\label{SbLbScLc}
\end{align}
The projection matrices \( V \) and \( W \) in (\ref{Kry_V}) and (\ref{Kry_W}), respectively, solve the following Sylvester equations:
\begin{align}
AV-EVS_b+BL_b&=0,\label{sylv_V}\\
A^TW-E^TWS_c^*+C^TL_c^*&=0.\label{sylv_W}
\end{align}By pre-multiplying (\ref{sylv_V}) with \( W^* \), it can be observed that the matrix \( A_r \) can be expressed as \( A_r = E_r S_b - B_r L_b \). This allows \( A_r \) to be parameterized in terms of \( E_r \) and \( B_r \) without affecting the interpolation conditions induced by \( V \), as this is equivalent to varying \( W \). Assume the pair \( (-S_b, L_b) \) is observable and solves the following Lyapunov equation:
\begin{align}
-S_b^* Q_s - Q_s S_b + L_b^* L_b = 0.\label{pork_qs}
\end{align}
By setting \( E_r = I \) and \( B_r = Q_s^{-1} L_b^* \), \( A_r \) becomes \( A_r = -Q_s^{-1} S_b^* Q_s \). The resulting ROM:
\begin{align}
E_r &= I, & A_r &= -Q_s^{-1} S_b^* Q_s,\nonumber\\
B_r &= Q_s^{-1} L_b^*, & C_r &= CV,\nonumber
\end{align}
satisfies the optimality condition (\ref{op3}). This approach will be referred to as Input PORK (I-PORK) throughout this paper.

Similarly, by pre-multiplying (\ref{sylv_W}) with $V^*$, it can be noted that \( A_r \) can also be represented as \( A_r = S_c E_r - L_c C_r \). This allows \( A_r \) to be parameterized in terms of \( E_r \) and \( C_r \) without affecting the interpolation conditions induced by \( W \), as this is equivalent to varying \( V \). Assume the pair \( (-S_c, L_c) \) is controllable and solves the following Lyapunov equation:
\begin{align}
-S_c P_s - P_s S_c^* + L_c L_c^* = 0.\label{pork_ps}
\end{align}
By setting \( E_r = I \) and \( C_r = L_c^* P_s^{-1} \), \( A_r \) becomes \( A_r = -P_s S_c^* P_s^{-1} \). The resulting ROM:
\begin{align}
E_r &= I, & A_r &= -P_s S_c^* P_s^{-1},\nonumber\\
B_r &= W^* B, & C_r &= L_c^* P_s^{-1},\nonumber
\end{align}
satisfies the optimality condition (\ref{op2}). This approach will be referred to as Output PORK (O-PORK) throughout this paper.
\subsection{Interpolatory Loewner framework \cite{mayo2007framework}}
In the Loewner framework, the matrices of the ROM, which satisfies the interpolation condition (\ref{int_cond_1}), are constructed from transfer function samples at the interpolation points as follows:
\begin{align}
W^*EV&=\begin{bmatrix}-\frac{c_1G(\sigma_1)b_1 - c_1G(\mu_1)b_1}{\sigma_1 - \mu_1}&\cdots&-\frac{c_1G(\sigma_r)b_r - c_1G(\mu_1)b_r}{\sigma_r - \mu_1}\\\vdots&\ddots&\vdots\\-\frac{c_rG(\sigma_1)b_1 - c_rG(\mu_r)b_1}{\sigma_1 - \mu_r}&\cdots&-\frac{c_rG(\sigma_r) - G(\mu_r)b_r}{\sigma_r - \mu_r}\end{bmatrix},\nonumber\\
W^*AV&=\begin{bmatrix}-\frac{\sigma_1 c_1G(\sigma_1)b_1 - \mu_1 c_1G(\mu_1)b_1}{\sigma_1 - \mu_1}&\cdots&-\frac{\sigma_r c_1G(\sigma_r)b_r - \mu_1 c_1G(\mu_1)b_r}{\sigma_r - \mu_1}\\\vdots&\ddots&\vdots\\-\frac{\sigma_1 c_rG(\sigma_1)b_1 - \mu_r c_rG(\mu_r)b_1}{\sigma_1 - \mu_r}&\cdots&-\frac{\sigma_r c_rG(\sigma_r)b_r - \mu_r c_rG(\mu_r)b_r}{\sigma_r - \mu_r}\end{bmatrix},\nonumber\\
W^*B&=\begin{bmatrix}c_1G(\mu_1)\\\vdots\\c_rG(\mu_r)\end{bmatrix},\quad CV=\begin{bmatrix}G(\sigma_1)b_1&\cdots G(\sigma_r)b_r\end{bmatrix},\label{LF}
\end{align}where $V$ and $W$ are as in (\ref{Kry_V}) and (\ref{Kry_W}), respectively.
When \( \sigma_j \approx \mu_i \), the expressions approach to:
\begin{align}
 \frac{c_iG(\sigma_j)b_j - c_iG(\mu_i)b_j}{\sigma_j - \mu_i}&\approx c_i G^\prime(\sigma_j) b_j,\nonumber\\
  \frac{\sigma_j c_iG(\sigma_j)b_j - \mu_i c_iG(\mu_i)b_j}{\sigma_j - \mu_i} &\approx c_iG(\sigma_j)b_j + \sigma_j c_iG^\prime(\sigma_j) b_j.\nonumber
\end{align}Thus, when there are common elements in the sets of right and left interpolation points, samples of the derivative of \( G(s) \) at those common points are also required to construct \( W^*EV \) and \( W^*AV \). If block interpolation is required instead of tangential interpolation, one can treat \( b_j \) and \( c_i \) as scalars and set them to \( b_j = c_i = 1 \) in the formulas above.

The matrices \( E_r \) and \( A_r \) in the above formulas exhibit a special structure known as the Loewner matrix and shifted Loewner matrix, respectively. This structure is the reason behind the name ``Interpolatory Loewner framework''.
\subsection{Balanced Truncation (BT) \cite{moore1981principal}}
Let \( P \) and \( Q \) denote the controllability and observability Gramians, respectively, defined by the following integral expressions:
\begin{align}
P &= \frac{1}{2\pi} \int_{-\infty}^{\infty} (j\omega E - A)^{-1} BB^T (-j\omega E^T - A^T)^{-1} \, d\omega, \label{int1}\\
Q &= \frac{1}{2\pi} \int_{-\infty}^{\infty} (-j\omega E^T - A^T)^{-1} C^T C (j\omega E - A)^{-1} \, d\omega. \label{int2}
\end{align} \( P \) and \( Q \) can also be expressed using time-domain integral formulas as follows:
\begin{align}
P&=\int_{0}^{\infty}e^{E^{-1}A\tau}E^{-1}BB^TE^{-T}e^{A^TE^{-T}\tau}d\tau,\label{int3}\\
Q&=\int_{0}^{\infty}e^{E^{-T}A^T\tau}E^{-T}C^TCE^{-1}e^{AE^{-1}\tau}d\tau.\label{int4}
\end{align}
The Gramians \( P \) and \( Q \) can be computed by solving the following Lyapunov equations:
\begin{align}
APE^T + EPA^T + BB^T = 0,\label{lyap_P}\\
A^TQE + E^TQA + C^T C = 0.\label{lyap_Q}
\end{align}
Next, compute the Cholesky factorizations of \( P \) and \( Q \) as:
\[
P = Z_p Z_p^T \quad \text{and} \quad Q = Z_q Z_q^T.
\]
The balancing square-root algorithm \cite{tombs1987truncated} proceeds as follows. First, compute the singular value decomposition (SVD) of \( Z_q^T E Z_p \):
\[
Z_q^T E Z_p = \begin{bmatrix} U_1 & U_2 \end{bmatrix} \begin{bmatrix} S_1 & 0 \\ 0 & S_2 \end{bmatrix} \begin{bmatrix} V_1^T \\ V_2^T \end{bmatrix}.
\]
Finally, the projection matrices \( W \) and \( V \) in BT are constructed as:
\[
W = Z_q U_1 S_1^{-\frac{1}{2}} \quad \text{and} \quad V = Z_p V_1 S_1^{-\frac{1}{2}}.
\]
\subsection{Non-intrusive Quadrature-based Balanced Truncation (QuadBT)\cite{goseaQuad}\label{subsec_QuadBT}}
Our presentation of QuadBT differs slightly from the original formulation in \cite{goseaQuad}. This choice of presentation aims to emphasize that QuadBT, like all the algorithms proposed in this paper, compresses and distills data quadruplets to construct the ROM. The concepts of compression and distillation in the context of non-intrusive MOR will be discussed in detail in Section \ref{sec9}.

The integrals (\ref{int1}) and (\ref{int2}) can be approximated using a numerical quadrature rule as follows:
\begin{align}
P &\approx \hat{P} = \sum_{i=1}^{n_p} w_{p,i}^2 (j\omega_i E - A)^{-1} B B^T (-j\omega_i E^T - A^T)^{-1} + w_{p,\infty}^2 E^{-1} B B^T E^{-T},\nonumber\\
Q &\approx \hat{Q} = \sum_{i=1}^{n_q} w_{q,i}^2 (-j\nu_i E^T - A^T)^{-1} C^T C (j\nu_i E - A)^{-1} + w_{q,\infty}^2 E^{-T} C^T C E^{-1},\nonumber
\end{align}
where \(\omega_i\) and \(\nu_i\) are the quadrature nodes, and \(w_{p,i}^2\) and \(w_{q,i}^2\) are the corresponding quadrature weights. The weights \(w_{p,\infty}^2\) and \(w_{q,\infty}^2\) are associated with the nodes at infinity. The low-rank factors of \(P\) and \(Q\), denoted as \(\hat{P} = \hat{Z}_p \hat{Z}_p^T\) and \(\hat{Q} = \hat{Z}_q \hat{Z}_q^T\), can be decomposed as:
\[
\hat{Z}_p = \tilde{V} L_p, \quad \hat{Z}_q = \tilde{W} L_q,
\]
where
\begin{align}
\tilde{V} &= \begin{bmatrix} (j\omega_1 E - A)^{-1} B & \cdots & (j\omega_{n_p} E - A)^{-1} B & E^{-1} B \end{bmatrix},\label{V_tild}\\
\tilde{W} &= \begin{bmatrix} (-j\nu_1 E^T - A^T)^{-1} C^T & \cdots & (-j\nu_{n_q} E^T - A^T)^{-1} C^T & E^{-T} C^T \end{bmatrix},\label{W_tild}\\
L_p &=   \text{diag}(w_{p,1}, \dots, w_{p,n_p}, w_{p,\infty})\otimes I_m,\nonumber\\
L_q &=   \text{diag}(w_{q,1}, \dots, w_{q,n_q}, w_{q,\infty})\otimes I_p.\nonumber
\end{align}
The matrices \(L_p\) and \(L_q\) can be computed solely from the quadrature weights. Additionally, the terms \(E_w=\tilde{W}^* E \tilde{V}\), \(A_w=\tilde{W}^* A \tilde{V}\), \(B_w=\tilde{W}^* B\), and \(C_w=C \tilde{V}\) can be constructed non-intrusively using transfer function samples at the quadrature nodes within the Loewner framework as follows:
\begin{align}
E_w&=\begin{bmatrix}-\frac{G(j\omega_1) - G(j\nu_1)}{j\omega_1 - j\nu_1}&\cdots&-\frac{G(j\omega_{n_p}) - G(j\nu_1)}{j\omega_{n_p} - j\nu_1}\\\vdots&\ddots&\vdots\\-\frac{G(j\omega_1) - G(j\nu_{n_q})}{j\omega_1 - j\nu_{n_q}}&\cdots&-\frac{G(j\omega_{n_p}) - G(j\nu_{n_q})}{j\omega_{n_p} -j\nu_{n_q}}\end{bmatrix},\nonumber\\
A_w&=\begin{bmatrix}-\frac{j\omega_1 G(j\omega_1) - j\nu_1 G(j\nu_1)}{j\omega_1 - j\nu_1}&\cdots&-\frac{j\omega_{n_p} G(j\omega_{n_p}) - j\nu_1 G(j\nu_1)}{j\omega_{n_p} - j\nu_1}\\\vdots&\ddots&\vdots\\-\frac{j\omega_1 G(j\omega_1) - j\nu_{n_q} G(j\nu_{n_q})}{j\omega_1 - j\nu_{n_q}}&\cdots&-\frac{j\omega_{n_p} G(j\omega_{n_p}) - j\nu_{n_q} G(j\nu_{n_q})}{j\omega_{n_p} - j\nu_{n_q}}\end{bmatrix},\nonumber\\
B_w&=\begin{bmatrix}G(j\nu_1)\\\vdots\\G(j\nu_{n_q})\end{bmatrix},\quad C_w=\begin{bmatrix}G(j\omega_1)&\cdots& G(j\omega_{n_p})\end{bmatrix}.\label{jw_LF}
\end{align}
The low-rank factors \(\hat{Z}_p\) and \(\hat{Z}_q\) can then replace \(Z_p\) and \(Z_q\) in the balanced square root algorithm as:
\[
L_q^TE_wL_p = \begin{bmatrix} \hat{U}_1 & \hat{U}_2 \end{bmatrix} \begin{bmatrix} \hat{S}_1 & 0 \\ 0 & \hat{S}_2 \end{bmatrix} \begin{bmatrix} \hat{V}_1^* \\ \hat{V}_2^* \end{bmatrix}.
\]
Further, let the projection matrices \( \hat{W}_r \) and \( \hat{V}_r \) be defined as follows:
\[
\hat{W}_r = L_q \hat{U}_1 \hat{S}_1^{-\frac{1}{2}} \quad \text{and} \quad \hat{V}_r = L_p \hat{V}_1 \hat{S}_1^{-\frac{1}{2}}.
\]
The ROM in frequency-domain QuadBT is constructed by reducing the Loewner quadruplet \((E_w, A_w, B_w, C_w)\) as follows:
\begin{align}
E_r &= \hat{W}_r^* E_w \hat{V}_r=I, & A_r &= \hat{W}_r^* A_w \hat{V}_r, &B_r &= \hat{W}_r^* B_w, & C_r &= C_w \hat{V}_r.\nonumber
\end{align}
Similarly, the integrals (\ref{int3}) and (\ref{int4}) can be approximated using numerical quadrature as follows:
\begin{align}
P &\approx \sum_{i=1}^{n_p} w_{p,i}^2 e^{E^{-1} A t_i} E^{-1} B B^T E^{-T} e^{A^T E^{-T} t_i},\nonumber\\
Q &\approx \sum_{i=1}^{n_q} w_{q,i}^2 e^{E^{-T} A^T \tau_i} E^{-T} C^T C E^{-1} e^{A E^{-1} \tau_i}.\nonumber
\end{align}
The low-rank factors of \( P \) and \( Q \), denoted as \( \hat{P} = \hat{Z}_p \hat{Z}_p^T \) and \( \hat{Q} = \hat{Z}_q \hat{Z}_q^T \), can be decomposed as \( \hat{Z}_p = \tilde{V} L_p \) and \( \hat{Z}_q = \tilde{W} L_q \), where
\begin{align}
\tilde{V} &= \begin{bmatrix} e^{E^{-1} A t_1} E^{-1} B & \cdots & e^{E^{-1} A t_{n_p}} E^{-1} B \end{bmatrix},\label{Vt_tild}\\
\tilde{W} &= \begin{bmatrix} e^{E^{-T} A^T \tau_1} E^{-T} C^T & \cdots & e^{E^{-T} A^T \tau_{n_q}} E^{-T} C^T \end{bmatrix},\label{Wt_tild}\\
L_p &=   \text{diag}(w_{p,1}, \dots, w_{p,n_p}, w_{p,\infty})\otimes I_m,\nonumber\\
L_q &=   \text{diag}(w_{q,1}, \dots, w_{q,n_q}, w_{q,\infty})\otimes I_p.\nonumber
\end{align}
Let \( h(t) \) denote the impulse response of \( G(s) \). The impulse response and its derivative can be expressed as:
\begin{align}
h(t) &= C e^{E^{-1} A t} E^{-1} B = C E^{-1} e^{A E^{-1} t} B,\nonumber\\
h^\prime(t) &= C e^{E^{-1} A t} E^{-1} A E^{-1} B.\nonumber
\end{align}
The terms \( E_t=\tilde{W}^T E \tilde{V} \), \( A_t=\tilde{W}^T A \tilde{V} \), \( B_t=\tilde{W}^T B \), and \( C_t=C \tilde{V} \) can be constructed non-intrusively using samples of the impulse response and its derivative as follows:
\begin{align}
E_t &= \begin{bmatrix}
h(\tau_1 + t_1) & \cdots & h(\tau_1 + t_{n_p}) \\
\vdots & \ddots & \vdots \\
h(\tau_{n_q} + t_1) & \cdots & h(\tau_{n_q} + t_{n_p})
\end{bmatrix},\nonumber\\
A_t& = \begin{bmatrix}
h^\prime(\tau_1 + t_1) & \cdots & h^\prime(\tau_1 + t_{n_p}) \\
\vdots & \ddots & \vdots \\
h^\prime(\tau_{n_q} + t_1) & \cdots & h^\prime(\tau_{n_q} + t_{n_p})
\end{bmatrix},\nonumber\\
B_t &= \begin{bmatrix} h(\tau_1) \\ \vdots \\ h(\tau_{n_q}) \end{bmatrix}, \quad C_t = \begin{bmatrix} h(t_1) & \cdots & h(t_{n_p}) \end{bmatrix}.\label{impulse_quad}
\end{align}
Additionally, \( L_p \) and \( L_q \) can be computed from the quadrature weights. The low-rank factors \(\hat{Z}_p\) and \(\hat{Z}_q\) can then replace \(Z_p\) and \(Z_q\) in the balanced square root algorithm as:
\[
L_q^T E_t L_p = \begin{bmatrix} \hat{U}_1 & \hat{U}_2 \end{bmatrix} \begin{bmatrix} \hat{S}_1 & 0 \\ 0 & \hat{S}_2 \end{bmatrix} \begin{bmatrix} \hat{V}_1^* \\ \hat{V}_2^* \end{bmatrix}.
\]
Further, let the projection matrices \( \hat{W}_r \) and \( \hat{V}_r \) be defined as follows:
\[
\hat{W}_r = L_q \hat{U}_1 \hat{S}_1^{-\frac{1}{2}} \quad \text{and} \quad \hat{V}_r = L_p \hat{V}_1 \hat{S}_1^{-\frac{1}{2}}.
\]
The ROM in time-domain QuadBT is constructed by reducing the impulse data quadruplet \((E_t, A_t, B_t, C_t)\) as follows:
\begin{align}
E_r &= \hat{W}_r^* E_t \hat{V}_r=I, & A_r &= \hat{W}_r^* A_t \hat{V}_r, &B_r &= \hat{W}_r^* B_t, & C_r &= C_t \hat{V}_r.\nonumber
\end{align}
\section{Low-rank ADI-based Non-intrusive Balanced Truncation for Continuous-time Systems}\label{sec3}
In this section, we propose a non-intrusive implementation of BT using transfer function samples from the right-half of the $s$-plane, as opposed to the $j\omega$-axis, which is used for QuadBT. 

Projection-based low-rank methods for Lyapunov equations approximate the Lyapunov equations (\ref{lyap_P}) and (\ref{lyap_Q}) as follows:
\begin{align}
P &\approx \tilde{V} \hat{P} \tilde{V}^*& Q&\approx \tilde{W} \hat{Q}\tilde{W}^*.\nonumber
\end{align}
Any low-rank method for Lyapunov equations where \( \tilde{V} \) and \( \tilde{W} \) are interpolatory, and \( \hat{P} \) and \( \hat{Q} \) is computed non-intrusively can be effectively used to develop a non-intrusive low-rank BT algorithm. This is because, when \( \tilde{V} \) and \( \tilde{W} \) in \( P \approx \tilde{V} \hat{P} \tilde{V}^* \) and \( Q \approx \tilde{W} \hat{Q} \tilde{W}^* \), respectively, are interpolatory, the terms \( \tilde{W}^* E \tilde{V} \), \( \tilde{W}^* A \tilde{V} \), \( \tilde{W}^* B \), and \( C \tilde{V} \) can be computed non-intrusively within the Loewner framework using data. If \( \hat{P} = L_p L_p^* \) and \( \hat{Q} =  L_q L_q^* \) can also be computed non-intrusively, a non-intrusive formulation can be readily achieved.

The core idea behind interpolation-based methods and frequency-domain quadrature-based methods for approximating the Lyapunov equations (\ref{lyap_P}) and (\ref{lyap_Q}) is fundamentally similar. In numerical integration, the integrand is approximated by constructing its interpolant at specific nodes, which then serves as a surrogate for the original integrand in the integral. Instead of directly computing the integral of the original function, the integral of the interpolant is evaluated. Interpolatory projection-based methods implicitly follow the same approach. First, the interpolant of \( X(s) = (sE - A)^{-1}BB^T(s^*E^T - A^T)^{-1} \) is constructed as follows:
\[
\tilde{X}(s) = \tilde{V}(s\tilde{E} - \tilde{A})^{-1}\tilde{B}\tilde{B}^*(s^*\tilde{E}^* - \tilde{A}^*)^{-1}\tilde{V}^*,
\]
where \( X(\sigma_i) = \tilde{X}(\sigma_i) \) for \( i = 1, \dots, n_p \), and \(\sigma_i\) represent the chosen interpolation points. Subsequently, the method approximates \( P \) by implicitly computing the integral:
\[
P \approx \frac{1}{2\pi} \int_{-\infty}^{\infty} \tilde{X}(j\omega) \, d\omega.
\]
The key distinction lies in where \( X(s) \) is interpolated: in numerical integration, \( X(s) \) is interpolated along the \( j\omega \)-axis, whereas in interpolatory projection methods like the ADI method, the interpolation occurs in the right-half of the \( s \)-plane. However, the essence of interpolation is the same in both methods—replacing \( X(j\omega) \) with its approximation.

In \cite{wolf2016adi}, it is shown that the low-rank approximation of Lyapunov equations produced by the block version of PORK is identical to that produced by the ADI method \cite{benner2013efficient} when the mirror images of the interpolation points are used as ADI shifts. The block version of PORK enforces block interpolation instead of tangential interpolation. Over the past few decades, the ADI method has been highly successful in extending the applicability of BT to large-scale systems \cite{saak2012goal,benner2014self}. In the sequel, a non-intrusive implementation of the block version of PORK-based BT is formulated, which produces results identical to the ADI-based BT.

The controllability Gramian \( \hat{P} \) of the ROM produced by I-PORK is given by \( \hat{P} = Q_s^{-1} \). Similarly, the observability Gramian \( \hat{Q} \) of the ROM produced by O-PORK is given by \( \hat{Q} = P_s^{-1} \). These Gramians can be computed non-intrusively using only interpolation data. Furthermore, the projection matrices in I-PORK and O-PORK, respectively, are interpolatory. Thus, PORK qualifies for use in the non-intrusive implementation of low-rank BT.

In block interpolation, the projection matrices
\begin{align}
\tilde{V} &= \begin{bmatrix} (\sigma_1 E - A)^{-1} B & \cdots & (\sigma_{n_p} E - A)^{-1} B \end{bmatrix},\nonumber\\
\tilde{W} &= \begin{bmatrix} (\mu_1^* E^T - A^T)^{-1} C^T & \cdots & (\mu_{n_q}^* E^T - A^T)^{-1} C^T \end{bmatrix},\nonumber
\end{align}
solve the following Sylvester equations:
\begin{align}
A \tilde{V} - E \tilde{V} S_b + B L_b &= 0,\nonumber\\
A^T \tilde{W} - E^T \tilde{W} S_c^* + C^T L_c^T &= 0,\nonumber
\end{align}
where
\begin{align}
S_b &= \text{diag}(\sigma_1, \dots, \sigma_{n_p})\otimes I_m, &S_c &= \text{diag}(\mu_1, \dots, \mu_{n_q})\otimes I_p,\nonumber\\
L_b &= \begin{bmatrix} 1 & \cdots & 1 \end{bmatrix}\otimes I_m, & L_c^T &= \begin{bmatrix} 1 & \cdots &  1\end{bmatrix}\otimes I_p.\label{SbSc}
\end{align}
Assume that the pairs \((-S_b, L_b)\) and \((-S_c, L_c)\) are observable and controllable, respectively, and the Gramians \( Q_s \) and \( P_s \) solve the Lyapunov equations (\ref{pork_qs}) and (\ref{pork_ps}), respectively. The block version of PORK produces low-rank approximations of \( P \) and \( Q \) as \( P \approx \tilde{V} \hat{P} \tilde{V}^* \) and \( \tilde{W} \hat{Q} \tilde{W}^* \), where \( \hat{P} = Q_s^{-1} \) and \( \hat{Q} = P_s^{-1} \). These are the same approximations achieved using the ADI method with shifts \((-\sigma_1, \dots, -\sigma_{n_p})\) and \((-\mu_1, \dots, -\mu_{n_q})\), respectively.

Let us decompose \( \hat{P} = L_p L_p^* \) and \( \hat{Q} = L_q L_q^* \), and define \( \hat{Z}_p = V L_p \) and \( \hat{Z}_q = W L_q \). Thus, \( P \approx \hat{Z}_p \hat{Z}_p^* \) and \( Q \approx \hat{Z}_q \hat{Z}_q^* \). Low-rank BT can then be performed using these low-rank factors of the Gramians via the balancing square-root algorithm. As with Quad-BT, the expressions \( E_s=\tilde{W}^* E \tilde{V} \), \( A_s=\tilde{W}^* A \tilde{V} \), \( B_s=\tilde{W}^* B \), and \( C_s=C \tilde{V} \) can be computed non-intrusively within the Loewner framework, as follows:
\begin{align}
E_s&=\begin{bmatrix}-\frac{G(\sigma_1) - G(\mu_1)}{\sigma_1 - \mu_1}&\cdots&-\frac{G(\sigma_{n_p}) - G(\mu_1)}{\sigma_{n_p} - \mu_1}\\\vdots&\ddots&\vdots\\-\frac{G(\sigma_1) - G(\mu_{n_q})}{\sigma_1 - \mu_{n_q}}&\cdots&-\frac{G(\sigma_{n_p}) - G(\mu_{n_q})}{\sigma_{n_p} - \mu_{n_q}}\end{bmatrix},\nonumber\\
A_s&=\begin{bmatrix}-\frac{\sigma_1 G(\sigma_1) - \mu_1 G(\mu_1)}{\sigma_1 - \mu_1}&\cdots&-\frac{\sigma_{n_p} G(\sigma_{n_p}) - \mu_1 G(\mu_1)}{\sigma_{n_p} - \mu_1}\\\vdots&\ddots&\vdots\\-\frac{\sigma_1 G(\sigma_1) - \mu_{n_q} G(\mu_{n_q})}{\sigma_1 - \mu_{n_q}}&\cdots&-\frac{\sigma_{n_p} G(\sigma_{n_p}) - \mu_{n_q}G(\mu_{n_q})}{\sigma_{n_p} - \mu_{n_q}}\end{bmatrix},\nonumber\\
B_s&=\begin{bmatrix}G(\mu_1)\\\vdots\\G(\mu_{n_q})\end{bmatrix},\quad C_s=\begin{bmatrix}G(\sigma_1)&\cdots& G(\sigma_{n_p})\end{bmatrix}.\label{s_LF}
\end{align}
The low-rank factors \(\hat{Z}_p\) and \(\hat{Z}_q\) can then replace \(Z_p\) and \(Z_q\) in the balancing square-root algorithm as:
\begin{align}
L_q^TE_sL_p = \begin{bmatrix} \hat{U}_1 & \hat{U}_2 \end{bmatrix} \begin{bmatrix} \hat{S}_1 & 0 \\ 0 & \hat{S}_2 \end{bmatrix} \begin{bmatrix} \hat{V}_1^* \\ \hat{V}_2^* \end{bmatrix}.\label{proj_svd}
\end{align}
Further, let the projection matrices \( \hat{W}_r \) and \( \hat{V}_r \) be defined as follows:
\begin{align}
\hat{W}_r = L_q \hat{U}_1 \hat{S}_1^{-\frac{1}{2}} \quad \text{and} \quad \hat{V}_r = L_p \hat{V}_1 \hat{S}_1^{-\frac{1}{2}}.\label{proj_mat}
\end{align}
The ROM in low-rank ADI-based BT is constructed by reducing the Loewner quadruplet \((E_s, A_s, B_s, C_s)\) as follows:
\begin{align}
E_r &= \hat{W}_r^* E_s \hat{V}_r=I, & A_r &= \hat{W}_r^* A_s \hat{V}_r, &B_r &= \hat{W}_r^* B_s, & C_r &= C_s \hat{V}_r.\label{dist_Es}
\end{align} The pseudo-code for the non-intrusive ADI-based BT (NI-ADI-BT) is provided in Algorithm \ref{alg3}.
\begin{algorithm}
\caption{NI-ADI-BT}
\textbf{Input:} ADI shifts for approximating $P$: $(-\sigma_1,\cdots,-\sigma_{n_p})$; ADI shifts for approximating $Q$: $(-\mu_1,\cdots,-\mu_{n_q})$; Frequency-domain data: $\big(G(\sigma_1),\cdots,G(\sigma_{n_p}),G(\mu_1),\cdots,G(\mu_{n_q})\big)$ and $G^\prime(\sigma_i)$ for $\sigma_i=\mu_j$; Reduced order: $r$.

\textbf{Output:} ROM: $(E_r,A_r,B_r,C_r)$
\begin{algorithmic}[1]\label{alg3}
\STATE Compute the Loewner quadruplet $(E_s,A_s,B_s,C_s)$ from (\ref{s_LF}).\label{step1}
\STATE Set $S_b$, $S_c$, $L_b$, and $L_c$ as in (\ref{SbSc}).
\STATE Compute $Q_s$ and $P_s$ by solving the Lyapunov equations (\ref{pork_qs}) and (\ref{pork_ps}).
\STATE Decompose $Q_s^{-1}=L_pL_p^*$ and $P_s^{-1}=L_qL_q^*$.\label{step4}
\STATE Compute the projection matrices $\hat{V}_r$ and $\hat{W}_r$ from (\ref{proj_svd}) and (\ref{proj_mat}).
\STATE Compute the ROM from (\ref{dist_Es}).
\end{algorithmic}
\end{algorithm}

Furthermore, the low-rank Gramians produced by PORK monotonically approach the original Gramians as the number of interpolation points increases. Note that PORK satisfies the following:
\begin{align}
||G(s)-G_r(s)||_{\mathcal{H}_2}^2&=\text{trace}(CPC^T)-\text{trace}(CVQ_s^{-1}V^*C^T)\nonumber\\
&=\text{trace}\big(C(P-VQ_s^{-1}V^*)C^T\big),\label{res_V}\\
||G(s)-G_r(s)||_{\mathcal{H}_2}^2&=\text{trace}(B^TQB)-\text{trace}(B^TWP_s^{-1}W^*B)\nonumber\\
&=\text{trace}\big(B^T(Q-WP_s^{-1}W^*)B\big),\label{res_W}
\end{align}
The only variable part in (\ref{res_V}) is \( \text{trace}(C V Q_s^{-1} V^* C^T) =\text{trace}(C_sQ_s^{-1} C_s^*)\), which grows monotonically as the number of interpolation points increases. Similarly, the only variable part in (\ref{res_W}) is \( \text{trace}(B^T W P_s^{-1} W^* B) =\text{trace}(B_s^* P_s^{-1} B_s)\), which also grows monotonically as the number of interpolation points increases. Both these terms can be computed non-intrusively, allowing us to quantify the improvement in the accuracy of the Gramians by monitoring their growth.
\subsection{Avoiding the Computation of $Q_s$ and $P_s$}
In practice, as the number of interpolation points increases, \( Q_s \) and \( P_s \) start losing numerical rank, making the computation of inverses \( Q_s^{-1} \) and \( P_s^{-1} \) highly ill-conditioned. This situation can be avoided if we make the choice of ADI shifts \(-\sigma_i\) and \(-\mu_i\) a bit restrictive. Let us represent \(-\sigma_i\) and \(-\mu_i\) as follows:
\begin{align}
-\sigma_i=-\Bigg(\frac{\zeta_{i,\sigma}|\omega_{i,\sigma}|}{\sqrt{1-\zeta_{i,\sigma}^2}}\Bigg)+j\omega_{i,\sigma}\quad\textnormal{and}\quad -\mu_i=-\Bigg(\frac{\zeta_{i,\mu}|\omega_{i,\mu}|}{\sqrt{1-\zeta_{i,\mu}^2}}\Bigg)+j\omega_{i,\mu},\label{adi_shifts_cont}
\end{align}where $0<\zeta_{i,\sigma}\leq 1$ and $0<\zeta_{i,\mu}\leq 1$ are the damping coefficients of the ADI shifts. When \(\zeta_{i,\sigma}\ll 1\) and \(\zeta_{i,\mu}\ll 1\), the Gramians \(Q_s\) and \(P_s\) approximate the following:  
\begin{align}
Q_s&\approx \textnormal{diag}\Big(\frac{1}{2\textnormal{Re}(\sigma_1)},\cdots,\frac{1}{2\textnormal{Re}(\sigma_{n_p})}\Big)\otimes I_m,\label{qs_formula}\\
P_s&\approx \textnormal{diag}\Big(\frac{1}{2\textnormal{Re}(\mu_1)},\cdots,\frac{1}{2\textnormal{Re}(\mu_{n_q})}\Big)\otimes I_p;\label{ps_formula}
\end{align}see \cite{gawronski2004dynamics} for a detailed analysis of Gramians for modal state-space models. Thus, \( L_p \) and \( L_q \) in this case approximate the following:
\begin{align}
L_p&\approx \textnormal{diag}\Big(\sqrt{2\textnormal{Re}(\sigma_1)},\cdots,\sqrt{2\textnormal{Re}(\sigma_{n_p})}\Big)\otimes I_m,\label{lp_formula}\\
L_q&\approx \textnormal{diag}\Big(\sqrt{2\textnormal{Re}(\mu_1)},\cdots,\sqrt{2\textnormal{Re}(\mu_{n_q})}\Big)\otimes I_p.\label{lq_formula}
\end{align} In short, if the ADI shifts are lightly damped, we can compute \( L_p \) and \( L_q \) directly from the real parts of the shifts instead of computing \( Q_s \), \( P_s \), \( Q_s^{-1} \), and \( P_s^{-1} \).  
\subsection{Extension to Tangential-ADI Method}
In \cite{wolf2016adi}, a tangential version of the ADI method is proposed, based on the tangential PORK. To implement the tangential version of ADI-based BT, Algorithm \ref{alg3} can be slightly modified as follows: The tangential directions \( b_i \) and \( c_i \) should be normalized as \( \frac{b_i}{\|b_i\|_2} \) and \( \frac{c_i}{\|c_i\|_2} \), respectively, to avoid computing \( Q_s \) and \( P_s \). Note that this normalization does not affect the interpolation properties of the Loewner quadruplet but ensures that \( b_i^*b_i = c_i c_i^* = 1 \). In Step \ref{step1} of Algorithm \ref{alg3}, the Loewner quadruplet can be computed from (\ref{LF}). Assuming the ADI shifts are lightly damped, \( L_p \) and \( L_q \) in Step \ref{step4} can be computed as follows:
\begin{align}
L_p&\approx \textnormal{diag}\Big(\sqrt{2\textnormal{Re}(\sigma_1)},\cdots,\sqrt{2\textnormal{Re}(\sigma_{n_p})}\Big),\nonumber\\
L_q&\approx \textnormal{diag}\Big(\sqrt{2\textnormal{Re}(\mu_1)},\cdots,\sqrt{2\textnormal{Re}(\mu_{n_q})}\Big).\nonumber
\end{align}These changes extend Algorithm \ref{alg3} to the tangential version of ADI-based BT.
\section{Non-intrusive Implementations of IRKA for Continuous-time Systems}\label{sec4}
IRKA is highly effective for constructing $\mathcal{H}_2$-optimal ROMs through iterative refinement of interpolation data. However, its non-intrusive implementation poses a significant practical challenge. Each IRKA iteration updates the interpolation points, necessitating new estimations of \( G(\sigma_i)b_i \), \( c_iG(\sigma_i) \), and \( c_iG^{\prime}(\sigma_i)b_i \). As a consequence, the algorithm must be paused to compute new estimations, making it unsuitable for practical applications. In this section, three non-intrusive implementations of IRKA are proposed, which rely on existing available data instead of requiring new estimations of \( G(\sigma_i)b_i \), \( c_iG(\sigma_i) \), and \( c_iG^{\prime}(\sigma_i)b_i \) each time IRKA updates the interpolation data triplet \((\sigma_i, b_i, c_i)\).
\subsection{Using Available Frequency Response Data}
In industries such as aerospace, defense, and automotive, frequency-domain data is collected to construct the Fourier transform \( G(j\omega) \) by exciting systems at various frequencies \( \omega \) rad/sec. This data plays a critical role in numerous analysis and design tasks, including system identification, control design, resonance frequency calculation, and vibration analysis, among others \citep{lennart1999system, ozbay2018frequency, pintelon2008frequency, gillberg2006frequency, morelli2020practical}. In this subsection, we demonstrate that this existing data is sufficient for non-intrusive implementation of IRKA.

When the interpolation points \( \sigma_i \) and \( \mu_i \) have positive real parts, \( V \) and \( W \) in (\ref{sylv_V}) and (\ref{sylv_W}), respectively, can be computed using the integral expressions:
\begin{align}
V &= \frac{1}{2\pi} \int_{-\infty}^{\infty} (j\nu E - A)^{-1} B L_b (-j\nu I + S_b)^{-1} d\nu,\label{int_V}\\
W^* &= \frac{1}{2\pi} \int_{-\infty}^{\infty} (-j\nu I + S_c)^{-1} L_c C (j\nu E - A)^{-1} d\nu,\label{int_W}
\end{align}cf. \citep{sorensen2002sylvester}. These integrals can be approximated using numerical integration as follows:
\begin{align}
V &\approx \frac{1}{2\pi} \sum_{i=1}^{n_p} w_{v,i} (j\omega_i E - A)^{-1} B L_b (-j\omega_i I + S_b)^{-1},\label{sum_v}\\ 
W^* &\approx \frac{1}{2\pi} \sum_{i=1}^{n_q} w_{w,i} (-j\nu_i I + S_c)^{-1} L_c C (j\nu_i E - A)^{-1}, \label{sum_w}
\end{align}
where \( \omega_i \) and \( \nu_i \) are nodes, and \( w_{v,i} \) and \( w_{w,i} \) are their respective weights.

Let us define the projection matrices\( \hat{V}_r \) and \( \hat{W}_r \) as follows:
\begin{align}
\hat{V}_r &= \frac{1}{2\pi} \begin{bmatrix} w_{v,1} L_b (-j\omega_1 I + S_b)^{-1} \\ \vdots \\ w_{v,n_p} L_b (-j\omega_{n_p} I + S_b)^{-1} \end{bmatrix},\label{V_r}\\
\hat{W}_r^* &= \frac{1}{2\pi} \begin{bmatrix} (-j\nu_1 I + S_c)^{-1} L_c w_{w,1} & \cdots & (-j\nu_{n_q} I + S_c)^{-1} L_c w_{w,n_q} \end{bmatrix}.\label{W_r}
\end{align}
It is evident that the summations (\ref{sum_v}) and (\ref{sum_w}) can be represented as \( \tilde{V} \hat{V}_r \) and \( \hat{W}_r^* \tilde{W}^* \), respectively, where $\tilde{V}$ and $\tilde{W}$ are as in (\ref{V_tild}) and (\ref{W_tild}), respectively. Thus, \( V \approx \tilde{V} \hat{V}_r \) and \( W \approx \tilde{W} \hat{W}_r \). Let us assume, for a moment, that this approximation is exact. In this case, the ROM satisfying the interpolation condition (\ref{int_cond_1}) can be obtained by reducing the Loewner quadruplet $(E_w,A_w,B_w,C_w)$ as follows:
\begin{align}
E_r &= \hat{W}_r^*E_w\hat{V}_r, & A_r &= \hat{W}_r^*A_w\hat{V}_r, &B_r &= \hat{W}_r^*B_w, & C_r &= C_w\hat{V}_r. \label{interim_ROM}
\end{align}
When $\sigma_j = \mu_i$, this ROM also satisfies the Hermite interpolation condition (\ref{int_cond_2}). Since $\hat{V}_r$ and $\hat{W}_r$ depend solely on the quadrature weights $w_{v,i}$ and $w_{w,i}$, the interpolation points $\sigma_j$ and $\mu_i$, and the tangential directions $b_j$ and $c_i$, the ROM $(E_r,A_r,B_r,C_r)$ can be computed non-intrusively.

It is now evident that IRKA can be implemented using frequency response data \( G(j\omega_i) \) and \( G(j\nu_i) \), eliminating the need for repeated estimations of \( G(\sigma_i) \) and \( G^\prime(\sigma_i) \) whenever IRKA updates $\sigma_i$. The pseudo-code for our proposed algorithm, called ``frequency-domain quadrature-based IRKA (FD-Quad-IRKA)'', is provided in Algorithm \ref{alg1}.
\begin{algorithm}
\caption{FD-Quad-IRKA}
\textbf{Inputs:}  Nodes: $(\omega_1,\cdots,\omega_{n_p})$, $(\nu_1,\cdots,\nu_{n_q})$; Frequency-domain data: $\big(G(j\omega_1),\cdots,G(j\omega_{n_p})\big)$, $\big(G(j\nu_1),\cdots,G(j\nu_{n_q})\big)$; $G^\prime(j\nu_i)$ for $\omega_i=\nu_j$; Quadrature weights: $(w_{v,1},\cdots,w_{v,n_p})$, $(w_{w,1},\cdots,w_{v,n_q})$; Interpolation data: $(\sigma_1,\cdots,\sigma_r)$, $(b_1,\cdots,b_r)$, $(c_1,\cdots,c_r)$; Tolerance: tol.

\textbf{Outputs:} ROM: $(E_r,A_r,B_r,C_r)$
\begin{algorithmic}[1]\label{alg1}
\STATE Compute the Loewner quadruplet $(E_w, A_w, B_w, C_w)$ from (\ref{jw_LF}). 
\STATE \textbf{while}\Big(relative change in $\lambda_i$ > tol\Big)
\STATE Set $S_b$, $L_b$, $S_c$, and $L_c$ as in (\ref{SbLbScLc}).
\STATE Set the projection matrices $\hat{V}_r$ and $\hat{W}_r$ as in (\ref{V_r}) and (\ref{W_r}).
\STATE Compute $(E_r,A_r,B_r,C_r)$ from (\ref{interim_ROM}).
\STATE Compute the eigenvalue decomposition: $E_r^{-1}A_r=T_r\Lambda T_r^{-1}$ where $\Lambda=diag(\lambda_1,\cdots,\lambda_r)$.
\STATE Update the interpolation data: $(\sigma_1,\cdots,\sigma_r)=(-\lambda_1,\cdots,-\lambda_r)$; $[b_1\cdots b_r]=B_r^*E_r^{-*}T_r^{-*}$; $[c_1^*\cdots c_r^*]=C_rT_r$.
\STATE \textbf{end while}
\end{algorithmic}
\end{algorithm}

\textit{Range of Frequency Domain Sampling:}
Let us restrict the integral range of (\ref{int_V}) from $[-\infty,\infty]$ to $[-\nu,\nu]$ rad/sec. Then $V_\nu=V\Big|_{-\nu}^{\nu}$ solves the following Sylvester equation:
\[
AV_\nu - EV_\nu S_b + S_{\nu,a} B L_b + B L_b S_{\nu,s} = 0,
\]
where
\[
S_{\nu,a} = \frac{E}{2\pi} \int_{-\nu}^{\nu} (j\nu E - A)^{-1} d\nu, \quad S_{\nu,s} = \frac{1}{2\pi} \int_{-\nu}^{\nu} (j\nu I + S_b)^{-1} d\nu,
\]
as described in \cite{zulfiqar2020frequency}. Theoretically, $S_{\nu,a} \rightarrow \frac{1}{2}I$ and $S_{\nu,s} \rightarrow \frac{1}{2}I$ as $\nu \rightarrow \infty$. In practice, $S_{\nu,a}$ reduces to $\frac{1}{2}I$ outside the bandwidth of $G(s)$. Similarly, $S_{\nu,s}$ begins to approach $\frac{1}{2}I$ once $\nu$ exceeds the largest imaginary part of the eigenvalues of $S_b$. As a result, $V_\nu$ becomes numerically equivalent to $V$ beyond a finite frequency range. Therefore, in practice, the nodes of the numerical quadrature can be confined to a finite frequency range, especially when the bandwidth of the system $G(s)$ is known.

Alternatively, the integration limits of the numerical quadrature rule can be mapped to $[-\infty,\infty]$. For instance, the integration limits $[-1, 1]$ in the Gauss-Legendre quadrature rule can be mapped to $[-\infty,\infty]$ using the following transformation:
\[
y = \tan\left(\frac{\pi}{2} x\right), \quad \frac{dy}{dx} = \frac{\pi}{2} \sec^2\left(\frac{\pi}{2} x\right).
\]
The quadrature weights can then be adjusted as $w_y = w_x \frac{\pi}{2} \sec^2\left(\frac{\pi}{2} x\right)$.

\textit{Avoiding the Samples of $G^{\prime}(s)$:} To implement IRKA, transfer function samples and their derivatives at \( r \) interpolation points are required. These can be computed by numerically integrating a single integral, though with some loss of accuracy compared to Algorithm \ref{alg1}, since single-sided moment matching is less accurate than double-sided moment matching. Let us define \( S_b \) and \( L_b \) as follows:
\begin{align}
S_b &= \text{blkdiag}\left( \begin{bmatrix} \sigma_1 & 1 \\ 0 & \sigma_1 \end{bmatrix}, \cdots, \begin{bmatrix} \sigma_r & 1 \\ 0 & \sigma_r \end{bmatrix} \right)\otimes I_m,\nonumber\\
L_b &= \begin{bmatrix} 1 & 0 & \cdots & 1 & 0 \end{bmatrix}\otimes I_m. \label{S_L}
\end{align}
By solving the Sylvester equation (\ref{sylv_V}), we obtain:
\[
V = \Big[ (\sigma_1 E - A)^{-1} B \quad -(\sigma_1 E - A)^{-1} E (\sigma_1 E - A)^{-1} B \quad \cdots\]
\[(\sigma_r E - A)^{-1} B \quad -(\sigma_r E - A)^{-1} E (\sigma_r E - A)^{-1} B\Big],
\]
as described in \cite{wolf2014h}. Consequently,
\[
C V = \begin{bmatrix} G(\sigma_1) & G^\prime(\sigma_1) & \cdots & G(\sigma_r) & G^\prime(\sigma_r) \end{bmatrix}.
\]
Thus, by setting \( S_b \) and \( L_b \) as in (\ref{S_L}), we can compute:
\begin{align}
\begin{bmatrix} G(\sigma_1) & G^\prime(\sigma_1) & \cdots & G(\sigma_r) & G^\prime(\sigma_r) \end{bmatrix} \approx \begin{bmatrix} G(j\omega_1) & \cdots & G(j\omega_{n_p}) \end{bmatrix} \hat{V}_r.\nonumber
\end{align}
\subsection{Using Available Impulse Response Data}
In many applications, obtaining frequency-domain measurements is impractical. Instead, impulse response data is frequently utilized for various analysis and design tasks. When direct impulse response measurements are not feasible, a step input can be applied, and the impulse response can be obtained through differentiation. While a detailed review of these methods falls outside the scope of this paper, we refer readers to \citep{stan2002comparison, finno1998impulse, holters2009impulse, foster1986impulse, borish1983efficient} for further insights. In this subsection, we demonstrate that existing impulse response data is sufficient for non-intrusive implementation of IRKA.

If the interpolation points \(\sigma_i\) and \(\mu_i\) have positive real parts, \(V\) and \(W\) can be computed using the following integral expressions:
\begin{align}
V &= \int_{0}^{\infty} e^{E^{-1}A\tau} E^{-1} B L_b e^{-S_b\tau} d\tau,\label{int_Vt}\\
W^*& = \int_{0}^{\infty} e^{-S_c\tau} L_c C E^{-1} e^{A E^{-1}\tau} d\tau,\label{int_Wt}
\end{align}
where \(S_b = \text{diag}(\sigma_1, \dots, \sigma_r)\), \(S_c = \text{diag}(\mu_1, \dots, \mu_r)\), \(L_b = \begin{bmatrix} b_1, \dots, b_r \end{bmatrix}\), and \(L_c^* = \begin{bmatrix} c_1^*, \dots, c_r^* \end{bmatrix}\).

These integrals can be approximated using numerical integration as follows:
\begin{align}
V &\approx \sum_{i=1}^{n_p} w_{v,i} e^{E^{-1}A t_i} E^{-1} B L_b e^{-S_b t_i},\label{sum_vt}\\
W^* &\approx \sum_{i=1}^{n_q} w_{w,i} e^{-S_c \tau_i} L_c C E^{-1} e^{A E^{-1} \tau_i},\label{sum_wt}
\end{align}
where \(t_i\) and \(\tau_i\) are quadrature nodes, and \(w_{v,i}\) and \(w_{w,i}\) are their respective weights.

Let us define the projection matrices \(\hat{V}_r\) and \(\hat{W}_r\) as follows:
\begin{align}
\hat{V}_r &= \begin{bmatrix} w_{v,1} L_b e^{-S_b t_1} \\ \vdots \\ w_{v,n_p} L_b e^{-S_b t_{n_p}} \end{bmatrix},\label{V_r_t}\\
\hat{W}_r^* &= \begin{bmatrix} e^{-S_c \tau_1} L_c w_{w,1} & \cdots & e^{-S_c \tau_{n_q}} L_c w_{w,n_q} \end{bmatrix}.\label{W_r_t}
\end{align}
It is evident that the summations (\ref{sum_vt}) and (\ref{sum_wt}) can be represented as \(\tilde{V} \hat{V}_r\) and \(\hat{W}_r^* \tilde{W}^*\), respectively, where $\tilde{V}$ and $\tilde{W}$ are as in (\ref{Vt_tild}) and (\ref{Wt_tild}), respectively. Thus, \(V \approx \tilde{V} \hat{V}_r\) and \(W \approx \tilde{W} \hat{W}_r\). Again, let us assume, for a moment, that this approximation is exact. In this case, the ROM satisfying the interpolation condition (\ref{int_cond_1}) can be obtained by reducing the impulse data quadruplet $(E_t,A_t,B_t,C_t)$ as follows:
\begin{align}
E_r&=\hat{W}_r^*E_t\hat{V}_r,& A_r&=\hat{W}_r^*A_t\hat{V}_r, & B_r&=\hat{W}_r^*B_t,& C_r&=C_t\hat{V}_r.\label{time_interimROM}
\end{align}
 When $\sigma_j = \mu_i$, this ROM also satisfies the Hermite interpolation condition (\ref{int_cond_2}). Since $\hat{V}_r$ and $\hat{W}_r$ depend solely on the quadrature weights $w_{v,i}$ and $w_{w,i}$, the interpolation points $\sigma_j$ and $\mu_i$, and the tangential directions $b_j$ and $c_i$, the ROM $(E_r,A_r,B_r,C_r)$ can be computed non-intrusively.

It is now evident that IRKA can be implemented using impulse response data, eliminating the need for repeated estimations of \( G(\sigma_i) \) and \( G^\prime(\sigma_i) \) whenever IRKA updates $\sigma_i$. The pseudo-code for our proposed algorithm, called ``time-domain quadrature-based IRKA (TD-Quad-IRKA)'', is provided in Algorithm \ref{alg2}.
\begin{algorithm}
\caption{TD-Quad-IRKA}
\textbf{Input:} Nodes: $(t_1,\cdots,t_{n_p}), (\tau_1,\cdots,\tau_{n_q})$; Impulse response data: $\big(h(t_1),\cdots,h(t_{n_p})\big)$, $\big(h(\tau_1),\cdots,h(\tau_{n_q})\big)$, $\big(h^\prime(t_1),\cdots,h^\prime(t_{n_p})\big)$, $\big(h^\prime(\tau_1),\cdots,h^\prime(\tau_{n_q})\big)$;  Quadrature weights: $(w_{v,1},\cdots,w_{v,n_p}), (w_{w,1},\cdots,w_{w,n_q})$; Interpolation data: $(\sigma_1,\cdots,\sigma_r)$, $(b_1,\cdots,b_r)$, $(c_1,\cdots,c_r)$; Tolerance: tol.

\textbf{Output:} ROM: $(E_r,A_r,B_r,C_r)$
\begin{algorithmic}[1]\label{alg2}
\STATE Compute the impulse data quadruplet $(E_t, A_t, B_t, C_t)$ from (\ref{impulse_quad}). 
\STATE \textbf{while}\Big(relative change in $\lambda_i$ > tol\Big)
\STATE Set $S_b$, $L_b$, $S_c$, and $L_c$ as in (\ref{SbLbScLc}).
\STATE Set the projection matrices $\hat{V}_r$ and $\hat{W}_r$ as in (\ref{V_r_t}) and (\ref{W_r_t}).
\STATE Compute $(E_r,A_r,B_r,C_r)$ from (\ref{time_interimROM}).
\STATE Compute the eigenvalue decomposition: $E_r^{-1}A_r=T_r\Lambda T_r^{-1}$ where $\Lambda=diag(\lambda_1,\cdots,\lambda_r)$.
\STATE Update the interpolation data: $(\sigma_1,\cdots,\sigma_r)=(-\lambda_1,\cdots,-\lambda_r)$; $[b_1\cdots b_r]=B_r^*E_r^{-*}T_r^{-*}$; $[c_1^*\cdots c_r^*]=C_rT_r$.
\STATE \textbf{end while}
\end{algorithmic}
\end{algorithm}

\textit{Range of Impulse Response Sampling:}
Let us restrict the integral range of (\ref{int_Vt}) from \([0, \infty]\) to \([0, t_f]\) rad/sec. Then \(V_\tau = V \big|_{0}^{t_f}\) solves the following Sylvester equation:
\[
A V_\tau - E V_\tau S_b + B L_b - E e^{E^{-1} A t_f} E^{-1} B L_b e^{-S_b t_f} = 0,
\]
as described in \cite{zulfiqar2020time}. Theoretically, \(e^{E^{-1} A t_f} \rightarrow 0\) and \(e^{-S_b t_f} \rightarrow 0\) as \(t_f \rightarrow \infty\). In practice, \(e^{E^{-1} A t_f}\) and \(e^{-S_b t_f}\) rapidly approach zero for a finite \(t_f\), depending on how far the eigenvalues of \(E^{-1} A\) and \(-S_b\) are from the \(j\omega\)-axis. The farther the eigenvalues of \(E^{-1} A\) and \(-S_b\) are from the \(j\omega\)-axis, the faster the exponentials \(e^{E^{-1} A t_f}\) and \(e^{-S_b t_f}\) decay to zero. As a result, \(V_\tau\) becomes numerically equivalent to \(V\) beyond a finite time range. Therefore, in practice, the nodes of the numerical quadrature can be confined to a finite time range, especially when the poles of \(G(s)\) are located far from the \(j\omega\)-axis in the left half of the \(s\)-plane. Consequently, we can use a finite \(t_f\) in the numerical quadrature rule, and the integration limits can be mapped accordingly. For instance, the integration limits \([-1, 1]\) in the Gauss-Legendre quadrature rule can be mapped to \([0, t_f]\) using the following transformation:
\[
y = 0.5 t_f (x + 1), \quad \frac{dy}{dx} = 0.5 t_f.
\]
The quadrature weights can then be adjusted as \(w_y = 0.5 t_f w_x\).

\textit{Avoiding the Samples of $h^{\prime}(t)$:} Similar to the frequency domain, by setting \(S_b\) and \(L_b\) as in (\ref{S_L}), we can compute the following:
\begin{align}
\begin{bmatrix} G(\sigma_1) & G^\prime(\sigma_1) & \cdots & G(\sigma_r) & G^\prime(\sigma_r) \end{bmatrix}\approx \begin{bmatrix} h(t_1) & \cdots & h(t_{n_p}) \end{bmatrix} \hat{V}_r.\nonumber
\end{align}Again, this results in some loss of accuracy, as single-sided moment matching is less accurate than double-sided moment matching.
\subsection{Using Available Transfer Function Samples}
In this subsection, we illustrate how the block version of PORK can be utilized to develop a non-intrusive implementation of IRKA using available transfer function samples. Recall the following expressions:
\begin{align}
CV&=\frac{1}{2\pi}\int_{-\infty}^{\infty}G(j\nu)L_b(-j\nu I+S_b)d\nu,\label{Gsig}\\
W^*B&=\frac{1}{2\pi}\int_{-\infty}^{\infty}(j\nu I+S_c)^{-1}L_cG(j\nu)d\nu.\label{Gmu}
\end{align}
Similar to the ADI method, if we substitute \( G(s) \) in (\ref{Gsig}) with its interpolant generated by the block version of I-PORK at the available interpolation points \( (\alpha_1, \cdots, \alpha_{n_p}) \), and replace \( G(s) \) in (\ref{Gmu}) with its interpolant produced by the block version of O-PORK at the interpolation points \( (\beta_1, \cdots, \beta_{n_q}) \), we can achieve a non-intrusive implementation of IRKA. The interpolation points \( \alpha_i \) and \( \beta_i \) are all located in the right-half of the \( s \)-plane.

Let us define the projection matrices \(\tilde{V}\) and \(\tilde{W}\) as follows:
\begin{align}
\tilde{V}&=\begin{bmatrix}(\alpha_1E-A)^{-1}B&\cdots&(\alpha_{n_p}E-A)^{-1}B\end{bmatrix},\\
\tilde{W}&=\begin{bmatrix}(\beta_1^*E^T-A^T)^{-1}C^T&\cdots&(\beta_{n_q}^*E^T-A^T)^{-1}C^T\end{bmatrix}.
\end{align}
Additionally, let us define the following matrices:
\begin{align}
S_\alpha&=\text{diag}(\alpha_1,\cdots,\alpha_{n_p})\otimes I_m,& S_\beta&=\text{diag}(\beta_1,\cdots,\beta_{n_q})\otimes I_p,\nonumber\\
L_\alpha&=\begin{bmatrix}1&\cdots&1\end{bmatrix}\otimes I_m, & L_{\beta}^T&=\begin{bmatrix}1&\cdots&1\end{bmatrix}\otimes I_p.
\end{align}
Let \(Q_\alpha\) and \(P_\beta\) be the solutions to the following Lyapunov equations:
\begin{align}
-S_\alpha^*Q_\alpha-Q_\alpha S_\alpha +L_{\alpha}^TL_\alpha&=0,\\
-S_\beta P_\beta-P_\beta S_\beta^* +L_\beta L_\beta^T&=0.
\end{align}
In (\ref{Gsig}), \(G(s)\) can be replaced with the ROM produced by the block version of I-PORK:
\begin{align}
E_\alpha&=I,&A_\alpha&=-Q_\alpha^{-1}S_\alpha^*Q_\alpha,\nonumber\\
B_\alpha&=Q_\alpha^{-1}L_\alpha^T,& C_\alpha&=C\tilde{V}.
\end{align}
Similarly, in (\ref{Gmu}), \(G(s)\) can be replaced with the ROM produced by the block version of O-PORK:
\begin{align}
E_\beta&=I,&A_\beta&=-P_\beta S_\beta^*P_\beta^{-1},\nonumber\\
B_\beta&=\tilde{W}^*B,& C_\beta&=L_\beta^*P_\beta^{-1}.
\end{align}
Consequently, we obtain the following approximations:
\begin{align}
CV&\approx\frac{1}{2\pi}C\tilde{V}\Big(\int_{-\infty}^{\infty}(j\nu I-A_\alpha)^{-1}B_\alpha L_b(-j\nu I+S_b)d\nu\Big),\\
W^*B&\approx\Big(\frac{1}{2\pi}\int_{-\infty}^{\infty}(j\nu I+S_c)^{-1}L_cC_\beta(j\nu I-A_\beta)^{-1}d\nu\Big)\tilde{W}^*B.
\end{align}
Let the projection matrices \(\hat{V}_r\) and \(\hat{W}_r\) be defined as:
\begin{align}
\hat{V}_r=\begin{bmatrix}(\sigma_1 I-A_\alpha)^{-1}B_\alpha b_1&\cdots&(\sigma_r I-A_\alpha)^{-1}B_\alpha b_r\end{bmatrix},\label{Var}\\
\hat{W}_r=\begin{bmatrix}(\mu_1^*I-A_\beta^*)^{-1}C_\beta^*c_1^*&\cdots&(\mu_r^*I-A_\beta^*)^{-1}C_\beta^*c_r^*\end{bmatrix}.\label{Wbr}
\end{align}
It can then be observed that \(V \approx \tilde{V} \hat{V}_r\) and \(W \approx \tilde{W} \hat{W}_r\). Assuming this approximation is exact, the ROM satisfying the interpolation condition (\ref{int_cond_1}) can be obtained by reducing the Loewner quadruplet \((E_{\alpha,\beta}, A_{\alpha,\beta}, B_{\alpha,\beta}, C_{\alpha,\beta}) = (\tilde{W}^* E \tilde{V}, \tilde{W}^* A \tilde{V}, \tilde{W}^* B, C \tilde{V})\) as follows:
\begin{align}
E_r &= \hat{W}_r^*E_{\alpha,\beta}\hat{V}_r, & A_r &= \hat{W}_r^*A_{\alpha,\beta}\hat{V}_r, &B_r &= \hat{W}_r^*B_{\alpha,\beta}, & C_r &= C_{\alpha,\beta}\hat{V}_r, \label{a_b_interim_ROM}
\end{align}
where
\begin{align}
E_{\alpha,\beta}&=\begin{bmatrix}-\frac{G(\alpha_1) - G(\beta_1)}{\alpha_1 - \beta_1}&\cdots&-\frac{G(\alpha_{n_p}) - G(\beta_1)}{\alpha_{n_p} - \beta_1}\\\vdots&\ddots&\vdots\\-\frac{G(\alpha_1) - G(\beta_{n_q})}{\alpha_1 - \beta_{n_q}}&\cdots&-\frac{G(\alpha_{n_p}) - G(\beta_{n_q})}{\alpha_{n_p} - \beta_{n_q}}\end{bmatrix},\nonumber\\
A_{\alpha,\beta}&=\begin{bmatrix}-\frac{\alpha_1 G(\alpha_1) - \beta_1 G(\beta_1)}{\alpha_1 - \beta_1}&\cdots&-\frac{\alpha_{n_p} G(\alpha_{n_p}) - \beta_1 G(\beta_1)}{\alpha_{n_p} - \beta_1}\\\vdots&\ddots&\vdots\\-\frac{\alpha_1 G(\alpha_1) - \beta_{n_q} G(\beta_{n_q})}{\alpha_1 - \beta_{n_q}}&\cdots&-\frac{\alpha_{n_p} G(\alpha_{n_p}) - \beta_{n_q}G(\beta_{n_q})}{\alpha_{n_p} - \beta_{n_q}}\end{bmatrix},\nonumber\\
B_{\alpha,\beta}&=\begin{bmatrix}G(\beta_1)\\\vdots\\G(\beta_{n_q})\end{bmatrix},\quad C_{\alpha,\beta}=\begin{bmatrix}G(\alpha_1)&\cdots& G(\alpha_{n_p})\end{bmatrix}.\label{a_b_LF}
\end{align}
When \(\sigma_j = \mu_i\), this ROM also satisfies the Hermite interpolation condition (\ref{int_cond_2}). Since \(\hat{V}_r\) and \(\hat{W}_r\) depend only on the interpolation points \(\alpha_j\), \(\beta_i\), \(\sigma_j\), and \(\mu_i\), as well as the tangential directions \(b_j\) and \(c_i\), the ROM \((E_r, A_r, B_r, C_r)\) can be computed in a non-intrusive manner.

It is now evident that IRKA can be implemented using available transfer function samples \( G(\alpha_i) \) and \( G(\beta_i) \), eliminating the need for repeated estimations of \( G(\sigma_i) \) and \( G^\prime(\sigma_i) \) whenever IRKA updates $\sigma_i$. The pseudo-code for our proposed algorithm, called ``PORK-based IRKA (PORK-IRKA)'', is provided in Algorithm \ref{alg02}.
\begin{algorithm}
\caption{PORK-IRKA}
\textbf{Inputs:}  Sampling points: $(\alpha_1,\cdots,\alpha_{n_p})$, $(\beta_1,\cdots,\beta_{n_q})$; Transfer function samples: $\big(G(\alpha_1),\cdots,G(\alpha_{n_p})\big)$, $\big(G(\beta_1),\cdots,G(\beta_{n_q})\big)$; $G^\prime(\alpha_i)$ for $\alpha_i=\beta_j$; Interpolation data: $(\sigma_1,\cdots,\sigma_r)$, $(b_1,\cdots,b_r)$, $(c_1,\cdots,c_r)$; Tolerance: tol.

\textbf{Outputs:} ROM: $(E_r,A_r,B_r,C_r)$
\begin{algorithmic}[1]\label{alg02}
\STATE Compute the Loewner quadruplet $(E_{\alpha,\beta}, A_{\alpha,\beta}, B_{\alpha,\beta}, C_{\alpha,\beta})$ from (\ref{a_b_LF}). 
\STATE \textbf{while}\Big(relative change in $\lambda_i$ > tol\Big)
\STATE Set the projection matrices $\hat{V}_r$ and $\hat{W}_r$ as in (\ref{Var}) and (\ref{Wbr}).
\STATE Compute $(E_r,A_r,B_r,C_r)$ from (\ref{a_b_interim_ROM}).
\STATE Compute the eigenvalue decomposition: $E_r^{-1}A_r=T_r\Lambda T_r^{-1}$ where $\Lambda=diag(\lambda_1,\cdots,\lambda_r)$.
\STATE Update the interpolation data: $(\sigma_1,\cdots,\sigma_r)=(-\lambda_1,\cdots,-\lambda_r)$; $[b_1\cdots b_r]=B_r^*E_r^{-*}T_r^{-*}$; $[c_1^*\cdots c_r^*]=C_rT_r$.
\STATE \textbf{end while}
\end{algorithmic}
\end{algorithm}
\subsection{Tracking the Error $||G(s)-G_r(s)||_{\mathcal{H}_2}$}
Let \( G_r(s)^{(i-1)} \) and \( G_r(s)^{(i)} \) represent the interim ROMs in the \((i-1)^{th}\) and \(i^{th}\) iterations of IRKA, respectively. As noted in \cite{MPIMD11-11}, the error in the \((i-1)^{th}\) iteration can be computed after the \(i^{th}\) iteration as follows:
\begin{align}
||G(s)&-G_r(s)^{(i-1)}||_{\mathcal{H}_2}^2\nonumber\\
=&||G(s)||_{\mathcal{H}_2}^2+||G_r(s)^{(i-1)}||_{\mathcal{H}_2}^2-2\text{trace}\Big(C_r^{(i)}\big(C_r^{(i-1)}T_r^{(i-1)}\big)^*\Big).\nonumber
\end{align}Thus, with a delay of one iteration, the error \( ||G(s) - G_r(s)||_{\mathcal{H}_2} \) can be tracked if \( ||G_r(s)||_{\mathcal{H}_2}^2 \) is computed in every iteration. It is important to note that the original expression presented in \cite{MPIMD11-11} is intrusive, whereas the expression above is its non-intrusive equivalent. To summarize, the error in non-intrusive IRKA can also be monitored non-intrusively by tracking the following term:
\begin{align}
||G_r(s)^{(i-1)}||_{\mathcal{H}_2}^2-2\text{trace}\Big(C_r^{(i)}\big(C_r^{(i-1)}T_r^{(i-1)}\big)^*\Big).\nonumber
\end{align} However, it should be noted that the term \(-2 \, trace\Big(C_r^{(i)}\big(C_r^{(i-1)}T_r^{(i-1)}\big)^*\Big)\) is an approximation and not exact. Its accuracy depends on the precision of the approximations of the integrals (\ref{int_V}) and (\ref{int_W}) or (\ref{int_Vt}) and (\ref{int_Wt}).
\section{Non-intrusive Implementations of IRKA for Discrete-time Systems}\label{sec5}
Consider the following discrete-time system of order \( n \), denoted as \( G(z) \), and its ROM of order \( r \), denoted as \( G_r(z) \):
\begin{align}
G(z) &= C(zE - A)^{-1}B, \nonumber \\
G_r(z) &= C_r(zE_r - A_r)^{-1}B_r, \nonumber
\end{align}where \( z = e^{j\omega} \).

Assuming that \( G(z) \) and \( G_r(z) \) have simple poles, they can be expressed in the pole-residue form as follows:
\[
G(z) = \sum_{k=1}^{n} \frac{l_k r_k^*}{z - \lambda_k}, \quad G_r(z) = \sum_{k=1}^{r} \frac{\hat{l}_k \hat{r}_k^*}{z - \hat{\lambda}_k}.
\]
The necessary conditions for a local optimum of \( ||G(z) - G_r(z)||_{\mathcal{H}_2}^2 \) are given by:
\begin{align}
\hat{l}_i^* G^{\prime}\left(\frac{1}{\hat{\lambda}_i}\right) \hat{r}_i &= \hat{l}_i^* G_r^{\prime}\left(\frac{1}{\hat{\lambda}_i}\right) \hat{r}_i, \label{opd1} \\
\hat{l}_i^* G\left(\frac{1}{\hat{\lambda}_i}\right) &= \hat{l}_i^* G_r\left(\frac{1}{\hat{\lambda}_i}\right), \label{opd2} \\
G\left(\frac{1}{\hat{\lambda}_i}\right) \hat{r}_i &= G_r\left(\frac{1}{\hat{\lambda}_i}\right) \hat{r}_i, \label{opd3}
\end{align} for \( i = 1, 2, \dots, r \).

Similar to the continuous-time case, since the ROM \( G_r(z) \) is initially unknown, the discrete-time IRKA (DT-IRKA) \cite{bunse2010h2} uses fixed-point iterations starting from an arbitrary initial guess of the interpolation data to search for a local optimum. After each iteration, the interpolation data is updated as \( \sigma_i = \mu_i = \frac{1}{\hat{\lambda}_i} \), \( b_i = \hat{r}_i \), and \( c_i = \hat{l}_i^* \) until convergence is achieved. Upon convergence, a local optimum of \( ||G(z) - G_r(z)||_{\mathcal{H}_2}^2 \) is achieved. 

However, since DT-IRKA updates the interpolation points during the process, it requires estimating the transfer function samples at these updated points. This necessitates halting DT-IRKA and conducting new experiments to estimate new samples, which is often impractical.
\subsection{Using Available Frequency Response Data}
Let us define the following matrices:
\begin{align}
S_b&=\text{diag}(\sigma_1,\cdots,\sigma_r),\quad S_c=\text{diag}(\mu_1,\cdots,\mu_r),\nonumber\\
L_b&=\begin{bmatrix}b_1&\cdots&b_r\end{bmatrix},\quad L_c^*=\begin{bmatrix}c_1^*&\cdots&c_r^*\end{bmatrix},\nonumber\\
\bar{S}_b&=S_b^{-1},\quad \bar{L}_b=L_bS_b^{-1},\quad\bar{S}_c=S_c^{-1},\quad \bar{L}_c=S_c^{-1}L_c.\label{dt_SbLbScLC}
\end{align}
By post-multiplying equations (\ref{sylv_V}) and (\ref{sylv_W}) with \( S_b^{-1} \) and \( S_c^{-*} \), respectively, it can be observed that \( V \) and \( W \) in (\ref{Kry_V}) and (\ref{Kry_W}) satisfy the following Stein equations:  
\begin{align}
AV\bar{S}_b-EV+B\bar{L}_b&=0,\label{dsylv_V}\\
A^TW\bar{S}_c^*-E^TW+C^T\bar{L}_c^*&-0.\label{dsylv_W}
\end{align}When the eigenvalues of \( A \), \( \bar{S}_b \), and \( \bar{S}_c \) lie within the unit circle, \( V \) and \( W \) can be expressed using the following integral representations:  
\begin{align}
V&=\frac{1}{2\pi}\int_{-\pi}^{\pi}(e^{j\nu} E-A)^{-1}B\bar{L}_b(e^{-j\nu} I-\bar{S}_b)^{-1}d\nu,\label{dt_int_V}\\
W^*&=\frac{1}{2\pi}\int_{-\pi}^{\pi}(e^{-j\nu} I-\bar{S}_c)^{-1}\bar{L}_cC(e^{j\nu} E-A)^{-1}d\nu,\label{dt_int_W}
\end{align}cf. \cite{duan2015generalized}. These integrals can be approximated numerically as follows:  
\begin{align}
V&\approx\frac{1}{2\pi}\sum_{i=1}^{n_p}w_{v,i}(e^{j\xi_i} E-A)^{-1}B\bar{L}_b(e^{-j\xi_i} I-\bar{S}_b)^{-1},\label{sum_dv}\\
W^*&\approx\frac{1}{2\pi}\sum_{i=1}^{n_q}w_{w,i}(e^{-j\zeta_i} I-\bar{S}_c)^{-1}\bar{L}_cC(e^{j\zeta_i} E-A)^{-1},\label{sum_dw}
\end{align}where \( \xi_i \) and \( \zeta_i \) are the nodes, and \( w_{v,i} \) and \( w_{w,i} \) are their corresponding weights.
Next, define the following matrices:  
\begin{align}
\tilde{V}&=\begin{bmatrix}(e^{j\xi_1} E-A)^{-1}B&\cdots&(e^{j\xi_{n_p}} E-A)^{-1}B\end{bmatrix},\\
\hat{V}_r&=\frac{1}{2\pi}\begin{bmatrix}w_{v,1}\bar{L}_b(e^{-j\xi_1} I-\bar{S}_b)^{-1}\\\vdots\\w_{v,n_p}\bar{L}_b(e^{-j\xi_{n_p}} I-\bar{S}_b)^{-1}\end{bmatrix},\label{dt_Vr}\\
\tilde{W}^*&=\begin{bmatrix}C(e^{j\zeta_1} E-A)^{-1}\\\vdots\\C(e^{j\zeta_{n_q}} E-A)^{-1}\end{bmatrix},\\
\hat{W}_r^*&=\frac{1}{2\pi}\begin{bmatrix}(e^{-j\zeta_1} I-\bar{S}_c)^{-1}\bar{L}_cw_{w,1}&\cdots&(e^{-j\zeta_{n_q}} I-\bar{S}_c)^{-1}\bar{L}_cw_{w,n_q}\end{bmatrix}.\label{dt_Wr}
\end{align} From these definitions, it is clear that the summations (\ref{sum_dv}) and (\ref{sum_dw}) can be represented as \( \tilde{V} \hat{V}_r \) and \( \hat{W}_r^* \tilde{W}^* \), respectively. Thus, \( V \approx \tilde{V} \hat{V}_r \) and \( W \approx \tilde{W} \hat{W}_r \). Let us assume, for a moment, that this approximation is exact. In this case, the ROM satisfying the interpolation condition (\ref{int_cond_1}) can be obtained by reducing the Loewner quadruplet $(E_{jw},A_{jw},B_{jw},C_{jw})=(\tilde{W}^*E\tilde{V},\tilde{W}^*A\tilde{V},\tilde{W}^*B,C\tilde{V})$ as follows:
\begin{align}
E_r &= \hat{W}_r^*E_{jw}\hat{V}_r, & A_r &= \hat{W}_r^*A_{jw}\hat{V}_r, &B_r &= \hat{W}_r^*B_{jw}, & C_r &= C_{jw}\hat{V}_r, \label{jw_interim_ROM}
\end{align}
where 
\begin{align}
E_{jw}&=\begin{bmatrix}-\frac{G(e^{j\xi_1}) - G(e^{j\zeta_1})}{e^{j\xi_1} - e^{j\zeta_1}}&\cdots&-\frac{G(e^{j\xi_{n_p}}) - G(e^{j\zeta_1})}{e^{j\xi_{n_p}} - e^{j\zeta_1}}\\\vdots&\ddots&\vdots\\-\frac{G(e^{j\xi_1}) - G(e^{j\zeta_{n_q}})}{e^{j\xi_1} - e^{j\zeta_{n_q}}}&\cdots&-\frac{G(e^{j\xi_{n_p}}) - G(e^{j\zeta_{n_q}})}{e^{j\xi_{n_p}} -e^{j\zeta_{n_q}}}\end{bmatrix},\nonumber\\
A_{jw}&=\begin{bmatrix}-\frac{e^{j\xi_1} G(e^{j\xi_1}) - e^{j\zeta_1} G(e^{j\zeta_1})}{e^{j\xi_1} - e^{j\zeta_1}}&\cdots&-\frac{e^{j\xi_{n_p}} G(e^{j\xi_{n_p}}) - e^{j\zeta_1} G(e^{j\zeta_1})}{e^{j\xi_{n_p}} - e^{j\zeta_1}}\\\vdots&\ddots&\vdots\\-\frac{e^{j\xi_1} G(e^{j\xi_1}) - e^{j\zeta_{n_q}} G(e^{j\zeta_{n_q}})}{e^{j\xi_1} - e^{j\zeta_{n_q}}}&\cdots&-\frac{e^{j\xi_{n_p}} G(e^{j\xi_{n_p}}) - e^{j\zeta_{n_q}} G(e^{j\zeta_{n_q}})}{e^{j\xi_{n_p}} - e^{j\zeta_{n_q}}}\end{bmatrix},\nonumber\\
B_{jw}&=\begin{bmatrix}G(e^{j\zeta_1})\\\vdots\\G(e^{j\zeta_{n_q}})\end{bmatrix},\quad C_{jw}=\begin{bmatrix}G(e^{j\xi_1})&\cdots& G(e^{j\xi_{n_p}})\end{bmatrix}.\label{ejw_LF}
\end{align} Note that this is the same Loewner quadruplet $(E_{jw}, A_{jw}, B_{jw}, C_{jw})$ that the frequency-domain discrete-time QuadBT reduces to obtain a truncated balanced model, as discussed in \cite{goseaQuad}. When $\sigma_j = \mu_i$, this ROM also satisfies the Hermite interpolation condition (\ref{int_cond_2}). Since $\hat{V}_r$ and $\hat{W}_r$ depend solely on the quadrature weights $w_{v,i}$ and $w_{w,i}$, the interpolation points $\sigma_j$ and $\mu_i$, and the tangential directions $b_j$ and $c_i$, the ROM $(E_r,A_r,B_r,C_r)$ can be computed non-intrusively.

It is now clear that DT-IRKA can be implemented using frequency-domain data \( G(e^{j\xi_i}) \), eliminating the need for repeated estimations of \( G(\sigma_i) \) and \( G^\prime(\sigma_i) \) whenever DT-IRKA updates $\sigma_i$. The pseudo-code for the frequency-domain quadrature-based DT-IRKA (FD-Quad-DTIRKA) is outlined in Algorithm \ref{alg4}.
\begin{algorithm}
\caption{FD-Quad-DTIRKA}
\textbf{Input:} Nodes: $(\xi_1,\cdots,\xi_{n_p})$, $(\zeta_1,\cdots,\zeta_{n_q})$; Frequency-domain data: $\big(G(e^{j\xi_1}),\cdots,G(e^{j\xi_{n_v}})\big)$, $\big(G(e^{j\zeta_1}),\cdots,G(e^{j\zeta_{n_q}})\big)$, $G^{\prime}(e^{j\xi_i})$ for $\xi_i=\zeta_j$; Quadrature weights: $(w_{v,1},\cdots,w_{v,n_p})$, $(w_{w,1},\cdots,w_{w,n_q})$; Interpolation data: $(\sigma_1,\cdots,\sigma_r)$, $(b_1,\cdots,b_r)$, $(c_1,\cdots,c_r)$; Tolerance: tol.

\textbf{Output:} ROM: $(E_r,A_r,B_r,C_r)$
\begin{algorithmic}[1]\label{alg4}
\STATE Compute the Loewner quadruplet $(E_{jw},A_{jw},B_{jw},C_{jw})$ from (\ref{ejw_LF}).
\STATE \textbf{while}\Big(relative change in $\lambda_i$ > tol\Big)
\STATE Set $\bar{S}_b$, $\bar{L}_b$, $\bar{S}_c$, and $\bar{L}_c$ as in (\ref{dt_SbLbScLC}).
\STATE Compute the projection matrices $\hat{V}_r$ and $\hat{W}_r$ from (\ref{dt_Vr}) and (\ref{dt_Wr}).
\STATE Compute $(E_r,A_r,B_r,C_r)$ from (\ref{jw_interim_ROM}).
\STATE Compute the eigenvalue decomposition: $E_r^{-1}A_r=T_r\Lambda T_r^{-1}$ where $\Lambda=diag(\lambda_1,\cdots,\lambda_r)$.
\STATE Update the interpolation data: $(\sigma_1,\cdots,\sigma_r)=(\frac{1}{\lambda_1},\cdots,\frac{1}{\lambda_r})$; $[b_1\cdots b_r]=B_r^*E_r^{-*}T_r^{-*}$; $[c_1^*\cdots c_r^*]=C_rT_r$.
\STATE \textbf{end while}
\end{algorithmic}
\end{algorithm}
\subsection{Using Available Impulse Response Data}
When the eigenvalues of \( A \) and \( \bar{S}_b \) lie within the unit circle, the projection matrices \( V \) and \( W \) in the Stein equations (\ref{dsylv_V}) and (\ref{dsylv_W}) can be expressed as the following infinite sums:
\begin{align}
V&=\sum_{i=0}^{\infty}(E^{-1}A)^iE^{-1}B\bar{L}_b\bar{S}_b^i,\label{dt_sum_V}\\
W^*&=\sum_{i=0}^{\infty}(\bar{S}_c)^i\bar{L}_cCE^{-1}(AE^{-1})^i.\label{dt_sum_W}
\end{align}Since the eigenvalues of \( A \) and \( \bar{S}_b \) are inside the unit circle, the terms \( A^i \) and \( \bar{S}_b^i \) decay as \( i \) increases. Consequently, after a finite number of terms, the summands \( (E^{-1}A)^i E^{-1} B \bar{L}_b \bar{S}_b^i \) and \( (\bar{S}_c)^i \bar{L}_c C E^{-1} (A E^{-1})^i \) approach zero. This allows us to approximate \( V \) and \( W \) by truncating these sums as follows:
\begin{align}
V&\approx\sum_{i=0}^{n_p}(E^{-1}A)^iE^{-1}B\bar{L}_b\bar{S}_b^i,\label{sum_dvt}\\
W^*&\approx\sum_{i=0}^{n_q}(\bar{S}_c)^i\bar{L}_cCE^{-1}(AE^{-1})^i.\label{sum_dwt}
\end{align} 
Next, define the following matrices:
\begin{align}
\tilde{V}&=\begin{bmatrix}E^{-1}B&\cdots&(E^{-1}A)^{n_p}E^{-1}B\end{bmatrix},\\
\hat{V}_r&=\begin{bmatrix}\bar{L}_b\\\vdots\\\bar{L}_b\bar{S}_b^{n_p}\end{bmatrix},\label{dtvr}\\
\tilde{W}^*&=\begin{bmatrix}CE^{-1}\\\vdots\\CE^{-1}(AE^{-1})^{n_q}\end{bmatrix},\\
\hat{W}_r^*&=\begin{bmatrix}\bar{L}_c&\cdots&(\bar{S}_c)^{n_q}\bar{L}_c\end{bmatrix}.\label{dtwr}
\end{align} From these definitions, it is evident that the sums (\ref{sum_dvt}) and (\ref{sum_dwt}) can be represented as \( \tilde{V} \hat{V}_r \) and \( \hat{W}_r^* \tilde{W}^* \), respectively. Thus, we have the approximations \( V \approx \tilde{V} \hat{V}_r \) and \( W \approx \tilde{W} \hat{W}_r \). Let us assume, for a moment, that this approximation is exact. In this case, the ROM satisfying the interpolation condition (\ref{int_cond_1}) can be obtained by reducing the impulse data quadruplet $(E_{k},A_{k},B_{k},C_{k})=(\tilde{W}^*E\tilde{V},\tilde{W}^*A\tilde{V},\tilde{W}^*B,C\tilde{V})$ as follows:
\begin{align}
E_r &= \hat{W}_r^*E_{k}\hat{V}_r, & A_r &= \hat{W}_r^*A_{k}\hat{V}_r, &B_r &= \hat{W}_r^*B_{k}, & C_r &= C_{k}\hat{V}_r. \label{k_interim_ROM}
\end{align}When $\sigma_j = \mu_i$, this ROM also satisfies the Hermite interpolation condition (\ref{int_cond_2}).

The impulse response of \( G(z) \) is given by  
\begin{align}
h(k)=C(E^{-1}A)^kE^{-1}B=CE^{-1}(AE^{-1})^kB.\nonumber
\end{align}
The impulse data quadruplet \( (E_{k}, A_{k}, B_{k}, C_{k}) \) is the same as the one used in the time-domain discrete-time QuadBT \cite{goseaQuad} and can be computed non-intrusively as follows:
\begin{align}
E_{k}&=\begin{bmatrix}h(0)&\cdots&h(n_p-1)\\\vdots&\ddots&\vdots\\h(n_q-1)&\cdots&h(n_p+n_q-2)\end{bmatrix},\nonumber\\
A_{k}&=\begin{bmatrix}h(1)&\cdots&h(n_p)\\\vdots&\ddots&\vdots\\h(n_q)&\cdots&h(n_p+n_q-1)\end{bmatrix},\nonumber\\
B_{k}&=\begin{bmatrix}h(0)\\\vdots\\h(n_q)\end{bmatrix},\quad C_{k}=\begin{bmatrix}h(0)&\cdots&h(n_p)\end{bmatrix}.\label{k_LF}
\end{align}Since $\hat{V}_r$ and $\hat{W}_r$ depend solely on the interpolation points $\sigma_j$ and $\mu_i$, and the tangential directions $b_j$ and $c_i$, the ROM $(E_r,A_r,B_r,C_r)$ can be computed non-intrusively.

It is now clear that DT-IRKA can be implemented using impulse response data \( h(k) \), eliminating the need for repeated estimations of \( G(\sigma_i) \) and \( G^\prime(\sigma_i) \) whenever DT-IRKA updates $\sigma_i$. The pseudo-code for the time-domain DT-IRKA (TD-DTIRKA) is provided in Algorithm \ref{alg5}.
\begin{algorithm}
\caption{TD-DTIRKA}
\textbf{Input:} Impulse response data: $\big(h(0)),\cdots,h(i_v)\big)$; Nodes: $(0,\cdots,i_v)$; Interpolation data: $(\sigma_1,\cdots,\sigma_r)$, $(b_1,\cdots,b_r)$, $(c_1,\cdots,c_r)$; Tolerance: tol.

\textbf{Output:} ROM: $(E_r,A_r,B_r,C_r)$
\begin{algorithmic}[1]\label{alg5}
\STATE Compute the impulse data quadruplet $(E_k,A_k,B_k,C_k)$ from (\ref{k_LF}).
\STATE \textbf{while}\Big(relative change in $\lambda_i$ > tol\Big)
\STATE Set $\bar{S}_b$, $\bar{L}_b$, $\bar{S}_c$, and $\bar{L}_c$ as in (\ref{dt_SbLbScLC}).
\STATE Set the projection matrices $\hat{V}_r$ and $\hat{W}_r$ as in (\ref{dtvr}) and (\ref{dtwr}).
\STATE Compute $(E_r,A_r,B_r,C_r)$ from (\ref{k_interim_ROM}).
\STATE Compute the eigenvalue decomposition: $E_r^{-1}A_r=T_r\Lambda T_r^{-1}$ where $\Lambda=diag(\lambda_1,\cdots,\lambda_r)$.
\STATE Update the interpolation data: $(\sigma_1,\cdots,\sigma_r)=(\frac{1}{\lambda_1},\cdots,\frac{1}{\lambda_r})$; $[b_1\cdots b_r]=B_r^*E_r^{-*}T_r^{-*}$; $[c_1^*\cdots c_r^*]=C_rT_r$.
\STATE \textbf{end while}
\end{algorithmic}
\end{algorithm}
\section{Pseudo-optimal Rational Krylov (PORK) Algorithm for Discrete-time Systems}\label{sec6}
In this section, we extend PORK to discrete-time systems and show that the discrete-time version maintains properties comparable to its continuous-time counterpart. Building on the findings from this section, we will formulate non-intrusive implementations of BT and DT-IRKA in the following section.
\subsection{Input PORK (I-PORK)}
By pre-multiplying equation (\ref{dsylv_V}) with \( W^* \), we obtain:
\begin{align}
A_r\bar{S}_b-E_r+B_r\bar{L}_b=0,\nonumber\\
A_r=(E_r-B_r\bar{L}_b)\bar{S}_b^{-1}.\nonumber
\end{align}This shows that \( A_r \) can be parameterized in terms of \( E_r \) and \( B_r \) without altering the interpolation conditions imposed by \( V \), as this is equivalent to varying \( W \).

Assume that the pair \( (\bar{S}_b, \bar{L}_b) \) is observable, and its observability Gramian \( \bar{Q}_s \) satisfies the following discrete-time Lyapunov equation:
\begin{align}
\bar{S}_b^*\bar{Q}_s\bar{S}_b-\bar{Q}_s+\bar{L}_b^*\bar{L}_b=0.\label{Qs}
\end{align}
\begin{theorem}By setting \( E_r = I \) and \( B_r = \bar{Q}_s^{-1}\bar{L}_b^* \), the following properties hold: \label{th1}
\begin{enumerate}
  \item $A_r=\bar{Q}_s^{-1}\bar{S}_b^*\bar{Q}_s$.
  \item The controllability Gramian \( P_r \) of the pair \( (A_r, B_r) \) is \( P_r = \bar{Q}_s^{-1} \).
  \item The ROM $(E_r,A_r,B_r,C_r)=(I,\bar{Q}_s^{-1}\bar{S}_b^*\bar{Q}_s,\bar{Q}_s^{-1}\bar{L}_b^*,CV)$ satisfies the optimality condition (\ref{opd3}).
\end{enumerate}
\end{theorem}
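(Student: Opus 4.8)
The plan is to handle the three claims in sequence, since the first two are direct algebraic consequences of the defining Stein equation (\ref{Qs}), while the third requires unpacking the pole--residue structure of the resulting ROM and then invoking the tangential interpolation that is built into the projection $V$.

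For the first claim I would start from the parametrization $A_r = (E_r - B_r\bar{L}_b)\bar{S}_b^{-1}$, obtained by pre-multiplying the Stein equation (\ref{dsylv_V}) by $W^*$, substitute $E_r = I$ and $B_r = \bar{Q}_s^{-1}\bar{L}_b^*$, and then use (\ref{Qs}) in the rearranged form $\bar{L}_b^*\bar{L}_b = \bar{Q}_s - \bar{S}_b^*\bar{Q}_s\bar{S}_b$ to eliminate $\bar{L}_b^*\bar{L}_b$. After substitution the identity contribution cancels and the surviving term simplifies (using $\bar{S}_b\bar{S}_b^{-1} = I$) to $A_r = \bar{Q}_s^{-1}\bar{S}_b^*\bar{Q}_s$. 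This is a one-line computation with no real obstacle.

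For the second claim I would write down the discrete-time Lyapunov equation characterizing the controllability Gramian of the pair $(A_r, B_r)$ with $E_r = I$, namely $A_r P_r A_r^* - P_r + B_r B_r^* = 0$, and verify that $P_r = \bar{Q}_s^{-1}$ solves it. Substituting $A_r = \bar{Q}_s^{-1}\bar{S}_b^*\bar{Q}_s$ and $B_r = \bar{Q}_s^{-1}\bar{L}_b^*$, and using that $\bar{Q}_s$ is Hermitian so that $A_r^* = \bar{Q}_s\bar{S}_b\bar{Q}_s^{-1}$, every term picks up a left factor $\bar{Q}_s^{-1}$ and a right factor $\bar{Q}_s^{-1}$; factoring these out leaves precisely the left-hand side of (\ref{Qs}) in the middle, which vanishes. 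Uniqueness of the stabilizing solution then identifies $P_r = \bar{Q}_s^{-1}$ as the controllability Gramian.

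The third claim is where the main work lies. Here I would first diagonalize $A_r = \bar{Q}_s^{-1}\bar{S}_b^*\bar{Q}_s$: since $\bar{S}_b^*$ is diagonal, the columns of $\bar{Q}_s^{-1}$ are right eigenvectors and the reduced poles $\hat{\lambda}_i$ are the diagonal entries of $\bar{S}_b^*$. Reading the pole--residue form $G_r(z) = \sum_i \hat{l}_i\hat{r}_i^*/(z-\hat{\lambda}_i)$ off this eigendecomposition, and using $\bar{Q}_s B_r = \bar{L}_b^*$, shows that the right residue directions $\hat{r}_i$ are aligned with the columns $b_i$ of $L_b$. Consequently the reciprocal images $1/\hat{\lambda}_i$ of the reduced poles coincide with the right interpolation points and the directions $\hat{r}_i$ coincide with the $b_i$, so the right-tangential interpolation already guaranteed by the projection with $V$, namely $G(\sigma_j)b_j = G_r(\sigma_j)b_j$ from (\ref{int_cond_1}) via (\ref{Kry_V}), is exactly the optimality condition (\ref{opd3}). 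I expect the obstacle to be the bookkeeping that ties reduced poles, residue directions, and interpolation points together into the precise form (\ref{opd3}); in particular the conjugation that arises for complex data must be tracked carefully and is reconciled through the conjugate-symmetric real realization discussed in Section \ref{sec2}, so that the interpolation conditions induced by $V$ match (\ref{opd3}) term by term.
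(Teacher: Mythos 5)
Your proposal is correct and follows essentially the same route as the paper: parts 1 and 2 are the same algebraic manipulations of the Stein equation (\ref{Qs}) (verification plus uniqueness of the solution), and your eigendecomposition of $A_r$ with $\bar{Q}_s^{-1}$ as the eigenvector matrix is exactly the paper's state transformation to the modal form $(\bar{S}_b^*,\bar{L}_b^*,CV\bar{Q}_s^{-1})$, from which the poles $1/\sigma_i^*$ and residue directions aligned with $b_i$ are read off. Your explicit remark that the conjugation must be reconciled via conjugate-closed data is, if anything, slightly more careful than the paper, which states the condition only in its conjugated form.
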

\begin{proof}Pre-multiplying (\ref{Qs}) by \( \bar{Q}_s^{-1} \) and post-multiplying by \( \bar{S}_b^{-1} \), we obtain:
\begin{align}
\bar{Q}_s^{-1}\bar{S}_b^*\bar{Q}_s-\big(I+\bar{Q}_s^{-1}\bar{L}_b^*\bar{L}_b\big)\bar{S}_b^{-1}=0.\nonumber
\end{align}Thus, \( A_r = \bar{Q}_s^{-1}\bar{S}_b^*\bar{Q}_s \).

The controllability Gramian \( P_r \) satisfies the discrete-time Lyapunov equation:
\begin{align}
A_rP_rA_r^T-E_rP_rE_r^T+B_rB_r^T&=0,\nonumber\\
\bar{Q}_s^{-1}\bar{S}_b^*\bar{Q}_sP_r\bar{Q}_s\bar{S}_b\bar{Q}_s^{-1}-P_r+\bar{Q}_s^{-1}\bar{L}_b^*\bar{L}_b\bar{Q}_s^{-1}&=0,\nonumber\\
\bar{S}_b^*\bar{Q}_sP_r\bar{Q}_s\bar{S}_b-\bar{Q}_sP_r\bar{Q}_s+\bar{L}_b^*\bar{L}_b&=0.\nonumber
\end{align}
Due to uniqueness, \( \bar{Q}_sP_r\bar{Q}_s = \bar{Q}_s \), and thus \( P_r = \bar{Q}_s^{-1} \).

Applying a state transformation using \( \bar{Q}_s \), the modal form of the ROM becomes:
\begin{align}
A_r&=\bar{S}_b^*,& B_r&=\bar{L}_b^*,& C_r&=C\bar{V}\bar{Q}_s^{-1}.\nonumber
\end{align}From the modal form, it is evident that this ROM satisfies the optimality condition $G\Big(\frac{1}{\hat{\lambda}_i^*}\Big)\hat{r}_i^*=G_r\Big(\frac{1}{\hat{\lambda}_i^*}\Big)\hat{r}_i^*$ since $\hat{\lambda}_i=\frac{1}{\sigma_i^*}$ and $\hat{r}_i=b_i^*$.
\end{proof}
\subsection{Output PORK (O-PORK)}
By taking the Hermitian of equation (\ref{dsylv_W}) and post-multiplying with \( V \), we obtain:
\begin{align}
\bar{S}_cA_r-E_r+\bar{L}_cC_r=0,\nonumber\\
A_r=\bar{S}_c^{-1}(E_r-\bar{L}_cC_r).\nonumber
\end{align}This demonstrates that \( A_r \) can be parameterized in terms of \( E_r \) and \( C_r \) without affecting the interpolation conditions imposed by \( W \), as this is equivalent to varying \( V \).

Assume that the pair \( (\bar{S}_c, \bar{L}_c) \) is controllable, and its controllability Gramian \( \bar{P}_s \) satisfies the following discrete-time Lyapunov equation:
\begin{align}
\bar{S}_c\bar{P}_s\bar{S}_c^*-\bar{P}_s+\bar{L}_c\bar{L}_c^*=0.\label{Ps}
\end{align}
\begin{theorem}By setting \( E_r = I \) and \( C_r = \bar{L}_c^*\bar{P}_s^{-1} \), the following properties hold:
\begin{enumerate}
  \item $A_r=\bar{P}_s\bar{S}_c^*\bar{P}_s^{-1}$.
  \item The observability Gramian \( Q_r \) of the pair \( (A_r, C_r) \) is \( Q_r = \bar{P}_s^{-1} \).
  \item The ROM $(E_r,A_r,B_r,C_r)=(I,\bar{P}_s\bar{S}_c^*\bar{P}_s^{-1},W^*B,\bar{L}_c^*\bar{P}_s^{-1})$ satisfies the optimality condition (\ref{opd2}).
\end{enumerate}
\end{theorem}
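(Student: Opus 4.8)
The plan is to prove this proposition as the exact dual of Theorem~\ref{th1} (the discrete-time I-PORK result), transposing every step from the controllability/input side to the observability/output side. The three claims will be established in order, starting from the parametrization $A_r = \bar{S}_c^{-1}(E_r - \bar{L}_c C_r)$ derived just above the statement by taking the Hermitian of the Stein equation (\ref{dsylv_W}) and post-multiplying with $V$.

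For the first claim, I would substitute $E_r = I$ and $C_r = \bar{L}_c^*\bar{P}_s^{-1}$ into that parametrization to obtain $A_r = \bar{S}_c^{-1}(I - \bar{L}_c\bar{L}_c^*\bar{P}_s^{-1})$. The key algebraic step is then to pre-multiply the Lyapunov equation (\ref{Ps}) by $\bar{S}_c^{-1}$ and post-multiply it by $\bar{P}_s^{-1}$, which yields $\bar{P}_s\bar{S}_c^*\bar{P}_s^{-1} - \bar{S}_c^{-1} + \bar{S}_c^{-1}\bar{L}_c\bar{L}_c^*\bar{P}_s^{-1} = 0$. Rearranging identifies $\bar{S}_c^{-1}(I - \bar{L}_c\bar{L}_c^*\bar{P}_s^{-1})$ with $\bar{P}_s\bar{S}_c^*\bar{P}_s^{-1}$, giving $A_r = \bar{P}_s\bar{S}_c^*\bar{P}_s^{-1}$. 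This uses only that $\bar{S}_c$ is invertible, which holds since $\bar{S}_c = S_c^{-1}$ for finite nonzero interpolation points $\mu_i$.

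For the second claim, I would write the discrete-time observability Lyapunov equation for the pair $(A_r, C_r)$ with $E_r = I$, namely $A_r^* Q_r A_r - Q_r + C_r^* C_r = 0$, and verify by direct substitution that $Q_r = \bar{P}_s^{-1}$ solves it. Using $A_r = \bar{P}_s\bar{S}_c^*\bar{P}_s^{-1}$ (so that $A_r^* = \bar{P}_s^{-1}\bar{S}_c\bar{P}_s$, since $\bar{P}_s$ is Hermitian) together with $C_r = \bar{L}_c^*\bar{P}_s^{-1}$, the left-hand side collapses to $\bar{P}_s^{-1}(\bar{S}_c\bar{P}_s\bar{S}_c^* - \bar{P}_s + \bar{L}_c\bar{L}_c^*)\bar{P}_s^{-1}$, whose inner factor is exactly (\ref{Ps}) and hence vanishes. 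Invoking uniqueness of the solution of the discrete Stein equation --- valid because the eigenvalues of $A_r$ coincide with those of $\bar{S}_c^*$ and thus lie inside the unit circle --- establishes $Q_r = \bar{P}_s^{-1}$.

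For the third claim, I would apply the similarity transformation $T = \bar{P}_s$ to the realization, sending $(A_r, B_r, C_r) \mapsto (T^{-1}A_r T,\, T^{-1}B_r,\, C_r T) = (\bar{S}_c^*,\, \bar{P}_s^{-1}W^*B,\, \bar{L}_c^*)$, which puts $A_r$ in modal (diagonal) form. Reading off the poles $\hat{\lambda}_i = 1/\mu_i^*$ and the left residual directions $\hat{l}_i = c_i^*$ from this form, one sees that $G_r$ left-tangentially interpolates $G$ at the reflected points $1/\hat{\lambda}_i$, which is precisely the optimality condition (\ref{opd2}). I expect the main obstacle to lie here: keeping the Hermitian-transpose and complex-conjugate bookkeeping consistent so that the modal realization matches the exact conjugation pattern in (\ref{opd2}), just as the proof of Theorem~\ref{th1} had to phrase condition (\ref{opd3}) in terms of $\hat{\lambda}_i = 1/\sigma_i^*$ and $\hat{r}_i = b_i^*$. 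The remaining parts are routine Lyapunov-equation manipulations dual to those already carried out for I-PORK.
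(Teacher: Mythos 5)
Your proposal is correct and is exactly the dualization that the paper invokes: the paper's own proof of this theorem consists of the single remark that it is dual to Theorem~\ref{th1}, and your three steps (pre-/post-multiplying (\ref{Ps}) to identify $A_r$, substituting into the Stein equation and invoking uniqueness for $Q_r$, and passing to the modal form via the similarity transformation $\bar{P}_s$ to read off condition (\ref{opd2})) mirror the paper's proof of Theorem~\ref{th1} term for term. You also correctly flag the same conjugation bookkeeping ($\hat{\lambda}_i = 1/\mu_i^*$, $\hat{l}_i \propto c_i^*$) that the paper itself glosses over in the I-PORK case.
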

\begin{proof}
The proof is dual to that of Theorem \ref{th1} and is therefore omitted for brevity.
\end{proof}
\subsection{Approximation of Gramians}
Note that, similar to its continuous-time counterpart, PORK can be implemented non-intrusively using samples of \( G(z) \) at \( G(\sigma_i) \) and \( G(\mu_i) \) without any modifications. Additionally, discrete-time PORK also exhibits a monotonic decay in error as the number of interpolation points increases, analogous to its continuous-time version, as will be explained below.

Consider constructing an \((r-1)^{th}\)-order ROM \( G_{r-1}(z) \) using I-PORK with the right interpolation points \( (\sigma_1, \dots, \sigma_{r-1}) \) and tangential directions \( (b_1, \dots, b_{r-1}) \). Clearly, \( G_{r-1}(z) \), like \( G_r(z) \), satisfies the interpolation conditions for \( i = 1, \dots, r-1 \). Thus, \( G_{r-1}(z) \) is a pseudo-optimal ROM for both \( G_r(z) \) and \( G(z) \). Consequently, the following relationships hold:
\begin{align}
||G(z)-G_{r-1}(z)||_{\mathcal{H}_2}^2&=||G(z)||_{\mathcal{H}_2}^2-||G_{r-1}(z)||_{\mathcal{H}_2}^2,\nonumber\\
||G_r(z)-G_{r-1}(z)||_{\mathcal{H}_2}^2&=||G_r(z)||_{\mathcal{H}_2}^2-||G_{r-1}(z)||_{\mathcal{H}_2}^2,\nonumber\\
||G(z)-G_r(z)||_{\mathcal{H}_2}^2&=||G(z)||_{\mathcal{H}_2}^2-||G_r(z)||_{\mathcal{H}_2}^2,\nonumber\\
||G_r(z)||_{\mathcal{H}_2}^2&\geq||G_{r-1}(z)||_{\mathcal{H}_2}^2,\nonumber\\
||G(z)-G_r(z)||_{\mathcal{H}_2}^2&\leq||G(z)-G_{r-1}(z)||_{\mathcal{H}_2}^2.\nonumber
\end{align}Therefore, as the order of the ROM increases, \( ||G(z) - G_r(z)||_{\mathcal{H}_2} \) decays monotonically. A similar result can be shown for O-PORK.

Note that the controllability Gramian \( P \) and the observability Gramian \( Q \) of the discrete-time state-space realization \( (E, A, B, C) \) satisfy the following discrete-time Lyapunov equations:
\begin{align}
APA^T-EPE^T+BB^T=0,\nonumber\\
A^TQA-E^TQE+C^TC=0.\nonumber
\end{align}When either the optimality condition (\ref{opd2}) or (\ref{opd3}) is satisfied, the following holds:
\begin{align}
||G(z)-G_r(z)||_{\mathcal{H}_2}^2&=trace\big(C(P-VP_rV^*)C^T\big)=trace\big(B^T(Q-WQ_rW^*)B\big),\nonumber
\end{align}cf. \cite{bunse2010h2}. I-PORK can approximate \( P \) as \( P \approx VP_rV^* \), and O-PORK can approximate \( Q \) as \( Q \approx WQ_rW^* \). These approximations \( P \approx VP_rV^* \) and \( Q \approx WQ_rW^* \) monotonically approach \( P \) and \( Q \), respectively, as the number of interpolation points increases in PORK.
\section{Non-intrusive PORK-based Low-rank Balanced Truncation for Discrete Time Systems}\label{sec7}
The low-rank approximations of \( P \) and \( Q \) can be derived from the block version of discrete-time PORK, similar to the continuous-time case, by defining \( \bar{S}_b \), \( \bar{L}_b \), \( \bar{S}_c \), and \( \bar{L}_c \) as follows:
\begin{align}
\bar{S}_b&=\big(\text{diag}(\sigma_1,\cdots,\sigma_{n_p})\otimes I_m\big)^{-1},\nonumber\\
\bar{L}_b&=\Big(\begin{bmatrix}1&\cdots&1\end{bmatrix}\otimes I_m\Big)\bar{S}_b,\nonumber\\
\bar{S}_c&=\big(\text{diag}(\mu_1,\cdots,\mu_{n_q})\otimes I_p\big)^{-1},\nonumber\\
\bar{L}_c^*&=\Big(\begin{bmatrix}1&\cdots&1\end{bmatrix}\otimes I_p\Big)\bar{S}_c^*.\label{LbLc}
\end{align} The quality of approximation of $P$ and $Q$ can be tracked non-intrusively by observing the growth of $CV\bar{Q}_s^{-1}V^*C^T$ and $B^TW^*\bar{P}_s^{-1}W^*B$, respectively. Note that \( CV \), \( W^*B \), \( P_r = \bar{Q}_s^{-1} \), and \( Q_r = \bar{P}_s^{-1} \) can be computed using interpolation data and samples of \( G(z) \) at the interpolation points \( \sigma_i \) and \( \mu_i \). Furthermore, since \( W^*EV \) and \( W^*AV \) can also be computed non-intrusively from (\ref{s_LF}) via the Loewner framework, a non-intrusive low-rank BT algorithm can be formulated, analogous to its continuous-time counterpart. The pseudo-code for the non-intrusive PORK-based discrete-time BT (NI-PORK-DTBT) is presented in Algorithm \ref{alg6}.
\begin{algorithm}
\caption{NI-PORK-DTBT}
\textbf{Input:} Shifts for approximating $P$: $(\sigma_1,\cdots,\sigma_{n_p})$; Shifts for approximating $Q$: $(\mu_1,\cdots,\mu_{n_q})$; Frequency-domain data: $\big(G(\sigma_1),\cdots,G(\sigma_{n_p}),G(\mu_1),\cdots,G(\mu_{n_q})\big)$ and $G^\prime(\sigma_i)$ for $\sigma_i=\mu_j$; Reduced order: $r$.

\textbf{Output:} ROM: $(E_r,A_r,B_r,C_r)$
\begin{algorithmic}[1]\label{alg6}
\STATE Compute the Loewner quadruplet $(E_s,A_s,B_s,C_s)$ from (\ref{s_LF}).
\STATE Set $\bar{S}_b$, $\bar{S}_c$, $\bar{L}_b$, and $\bar{L}_c$ as in (\ref{LbLc}).
\STATE Compute $\bar{Q}_s$ and $\bar{P}_s$ by solving the discrete-time Lyapunov equations (\ref{Qs}) and (\ref{Ps}).
\STATE Decompose $\bar{Q}_s^{-1}=L_pL_p^*$ and $\bar{P}_s^{-1}=L_qL_q^*$.
\STATE Compute the projection matrices $\hat{V}_r$ and $\hat{W}_r$ from (\ref{proj_svd}) and (\ref{proj_mat}).
\STATE Compute the ROM from (\ref{dist_Es}).
\end{algorithmic}
\end{algorithm}

Similar to the continuous-time case, the computation of \( L_p \) and \( L_q \) can be done directly when \( \sigma_i \) and \( \mu_i \) are lightly damped. Let us express \( \sigma_i \) and \( \mu_i \) as follows:
\begin{align}
\sigma_i=e^{\frac{\zeta_{i,\sigma}|\omega_{i,\sigma}|}{\sqrt{1-\zeta_{i,\sigma}^2}}}e^{j\omega_{i,\sigma}}\quad \textnormal{and} \quad \mu_i=e^{\frac{\zeta_{i,\mu}|\omega_{i,\mu}|}{\sqrt{1-\zeta_{i,\mu}^2}}}e^{j\omega_{i,\mu}},\label{pork_shifts_disc}
\end{align}where $0<\zeta_{i,\sigma}\leq 1$ and $0<\zeta_{i,\mu}\leq 1$ are the damping coefficients of $\sigma_i$ and $\mu_i$. When \( \zeta_{i,\sigma} \ll 1 \) and \( \zeta_{i,\mu} \ll 1 \), $\bar{Q}_s^{-1}$, \( L_p \), $\bar{P}_s^{-1}$, and \( L_q \) approximate to
\begin{align}
\bar{Q}_s^{-1}&\approx\textnormal{diag}\big((\bar{\sigma}_1\sigma_1-1),\cdots,(\bar{\sigma}_{n_p}\sigma_{n_p}-1)\big)\otimes I_m,\label{dt_qs_formula}\\
L_p&\approx\textnormal{diag}\big(\sqrt{\bar{\sigma}_1\sigma_1-1},\cdots,\sqrt{\bar{\sigma}_{n_p}\sigma_{n_p}-1}\big)\otimes I_m,\\
\bar{P}_s^{-1}&\approx\textnormal{diag}\big((\bar{\mu}_1\mu_1-1),\cdots,(\bar{\mu}_{n_q}\mu_{n_q}-1)\big)\otimes I_p,\\
L_q&\approx\textnormal{diag}\big(\sqrt{\bar{\mu}_1\mu_1-1},\cdots,\sqrt{\bar{\mu}_{n_q}\mu_{n_q}-1}\big)\otimes I_p.\label{dt_lq_formula}
\end{align} The tangential version of PORK-based BT for discrete-time systems can be derived similarly to the continuous-time case, which is omitted here for brevity.
\section{Non-intrusive PORK-based DT-IRKA for Discrete Time Systems}\label{sec8}
Similar to the continuous-time case, a block PORK-based non-intrusive implementation of DT-IRKA can also be formulated. Here, the interpolation points \(\alpha_i\) and \(\beta_i\) are all located outside the unit circle. Let us define the following matrices:
\begin{align}
\bar{S}_\alpha&=S_\alpha^{-1},\quad \bar{L}_\alpha=L_\alpha S_\alpha^{-1},\quad\bar{S}_\beta=S_\beta^{-1},\quad \bar{L}_\beta=S_\beta^{-1}L_\beta.\label{dt_SaLaSbLb}
\end{align}
Let \(\bar{Q}_\alpha\) and \(\bar{P}_\beta\) be the solutions to the following discrete-time Lyapunov equations:
\begin{align}
\bar{S}_\alpha^*\bar{Q}_\alpha\bar{S}_\alpha-\bar{Q}_\alpha +\bar{L}_{\alpha}^*\bar{L}_\alpha&=0,\\
\bar{S}_\beta \bar{P}_\beta\bar{S}_\beta^*-\bar{P}_\beta+ \bar{L}_\beta \bar{L}_\beta^*&=0.
\end{align}
The ROM produced by discrete-time I-PORK is given by:
\begin{align}
E_\alpha&=I,&A_\alpha&=\bar{Q}_\alpha^{-1}\bar{S}_\alpha^*\bar{Q}_\alpha,\nonumber\\
B_\alpha&=\bar{Q}_\alpha^{-1}\bar{L}_\alpha^T,& \bar{C}_\alpha&=C\tilde{V}.
\end{align}
Similarly, the ROM produced by discrete-time O-PORK is given by:
\begin{align}
E_\beta&=I,&A_\beta&=\bar{P}_\beta \bar{S}_\beta^*\bar{P}_\beta^{-1},\nonumber\\
B_\beta&=\tilde{W}^*B,& C_\beta&=\bar{L}_\beta^*\bar{P}_\beta^{-1}.
\end{align}
Note that \(\bar{Q}_\alpha^{-1}\) and \(\bar{P}_\beta^{-1}\) can be computed directly from \(\alpha_i\) and \(\beta_i\) when they are lightly damped, as discussed in the previous subsection.

Let the projection matrices \(\hat{V}_r\) and \(\hat{W}_r\) be defined as:
\begin{align}
\hat{V}_r=\begin{bmatrix}(\sigma_1 I-A_\alpha)^{-1}B_\alpha b_1&\cdots&(\sigma_r I-A_\alpha)^{-1}B_\alpha b_r\end{bmatrix},\label{dt_Var}\\
\hat{W}_r=\begin{bmatrix}(\mu_1^*I-A_\beta^*)^{-1}C_\beta^*c_1^*&\cdots&(\mu_r^*I-A_\beta^*)^{-1}C_\beta^*c_r^*\end{bmatrix}.\label{dt_Wbr}
\end{align} It can then be observed that \(V \approx \tilde{V} \hat{V}_r\) and \(W \approx \tilde{W} \hat{W}_r\). Assuming this approximation is exact, the ROM satisfying the interpolation condition (\ref{int_cond_1}) can be obtained by reducing the Loewner quadruplet \((E_{\alpha,\beta}, A_{\alpha,\beta}, B_{\alpha,\beta}, C_{\alpha,\beta})\) as follows:
\begin{align}
E_r &= \hat{W}_r^*E_{\alpha,\beta}\hat{V}_r, & A_r &= \hat{W}_r^*A_{\alpha,\beta}\hat{V}_r, &B_r &= \hat{W}_r^*B_{\alpha,\beta}, & C_r &= C_{\alpha,\beta}\hat{V}_r. \label{dt_a_b_interim_ROM}
\end{align}
When \(\sigma_j = \mu_i\), this ROM also satisfies the Hermite interpolation condition (\ref{int_cond_2}). Since \(\hat{V}_r\) and \(\hat{W}_r\) depend solely on the interpolation points \(\alpha_j\), \(\beta_i\), \(\sigma_j\), and \(\mu_i\), as well as the tangential directions \(b_j\) and \(c_i\), the ROM \((E_r, A_r, B_r, C_r)\) can be computed in a non-intrusive manner.

It is now clear that DT-IRKA can be implemented using available transfer function samples \( G(\alpha_i) \) and \( G(\beta_i) \), eliminating the need for repeated estimations of \( G(\sigma_i) \) and \( G^\prime(\sigma_i) \) whenever DT-IRKA updates $\sigma_i$. The pseudo-code for the PORK-based DT-IRKA (PORK-DTIRKA) is outlined in Algorithm \ref{alg04}.
\begin{algorithm}
\caption{PORK-DTIRKA}
\textbf{Input:} Sampling points: $(\alpha_1,\cdots,\alpha_{n_p})$, $(\beta_1,\cdots,\beta_{n_q})$; Transfer function samples: $\big(G(\alpha_1),\cdots,G(\alpha_{n_v})\big)$, $\big(G(\beta_1),\cdots,G(\beta_{n_q})\big)$, $G^{\prime}(\alpha_i)$ for $\alpha_i=\beta_j$; Interpolation data: $(\sigma_1,\cdots,\sigma_r)$, $(b_1,\cdots,b_r)$, $(c_1,\cdots,c_r)$; Tolerance: tol.

\textbf{Output:} ROM: $(E_r,A_r,B_r,C_r)$
\begin{algorithmic}[1]\label{alg04}
\STATE Compute the Loewner quadruplet $(E_{\alpha,\beta},A_{\alpha,\beta},B_{\alpha,\beta},C_{\alpha,\beta})$ from (\ref{a_b_LF}).
\STATE \textbf{while}\Big(relative change in $\lambda_i$ > tol\Big)
\STATE Compute the projection matrices $\hat{V}_r$ and $\hat{W}_r$ from (\ref{dt_Var}) and (\ref{dt_Wbr}).
\STATE Compute $(E_r,A_r,B_r,C_r)$ from (\ref{dt_a_b_interim_ROM}).
\STATE Compute the eigenvalue decomposition: $E_r^{-1}A_r=T_r\Lambda T_r^{-1}$ where $\Lambda=diag(\lambda_1,\cdots,\lambda_r)$.
\STATE Update the interpolation data: $(\sigma_1,\cdots,\sigma_r)=(\frac{1}{\lambda_1},\cdots,\frac{1}{\lambda_r})$; $[b_1\cdots b_r]=B_r^*E_r^{-*}T_r^{-*}$; $[c_1^*\cdots c_r^*]=C_rT_r$.
\STATE \textbf{end while}
\end{algorithmic}
\end{algorithm}
\subsection{Tracking the Error $||G(z)-G_r(z)||_{\mathcal{H}_2}^2$}
Let \( G_r(z)^{(i-1)} \) and \( G_r(z)^{(i)} \) represent the interim ROMs in the \((i-1)^{th}\) and \(i^{th}\) iterations of DT-IRKA, respectively. Similar to the continuous-time case, the error in the \((i-1)^{th}\) iteration can be computed after the \(i^{th}\) iteration as follows:
\begin{align}
||G(z)-&G_r(z)^{(i-1)}||_{\mathcal{H}_2}^2\nonumber\\
=&||G(z)||_{\mathcal{H}_2}^2+||G_r(z)^{(i-1)}||_{\mathcal{H}_2}^2-2\text{trace}\Big(C_r^{(i)}\big(C_r^{(i-1)}T_r^{(i-1)}\big)^*\Big).\nonumber
\end{align}Thus, with a delay of one iteration, the error \( ||G(z) - G_r(z)||_{\mathcal{H}_2} \) can be tracked by computing \( ||G_r(z)||_{\mathcal{H}_2}^2 \) in each iteration. Specifically, the variable component of the error in non-intrusive DT-IRKA can be monitored non-intrusively by tracking the following term:
\begin{align}
||G_r(z)^{(i-1)}||_{\mathcal{H}_2}^2 - 2 \, \text{trace}\Big(C_r^{(i)}\big(C_r^{(i-1)}T_r^{(i-1)}\big)^*\Big). \nonumber
\end{align} However, it is important to note that the term \(-2 \, \text{trace}\Big(C_r^{(i)}\big(C_r^{(i-1)}T_r^{(i-1)}\big)^*\Big)\) is an approximation and not exact. Its accuracy depends on the precision of the approximation of the integral (\ref{dt_int_V}) and (\ref{dt_int_W}) or the approximation of the infinite summation (\ref{dt_sum_V}) and (\ref{dt_sum_W}).
\section{Compression and Distillation of Data Quadruplets}\label{sec9}
Throughout this paper, a consistent pattern has emerged in all the discussed non-intrusive algorithms: each algorithm constructs a Loewner quadruplet (in the frequency domain) or an impulse data quadruplet (in the time domain) and then reduces the respective data quadruplet, as illustrated in Figure \ref{fig0}.
\begin{figure}[!h]
  \centering
  \includegraphics[width=4cm]{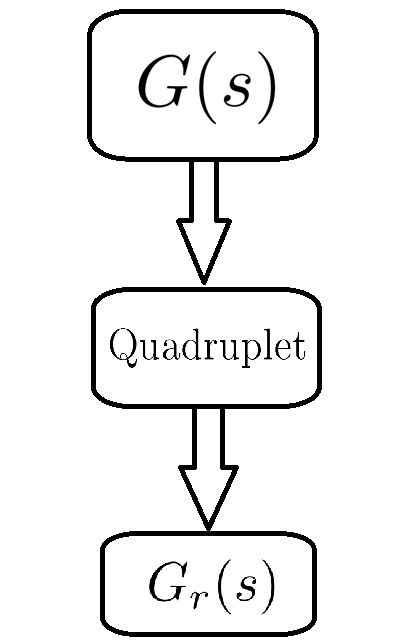}
  \caption{Working Principle}\label{fig0}
\end{figure}
It is now evident that all interpolatory low-rank BT algorithms, including Krylov-subspace-based low-rank BT, low-rank ADI-based BT, and QuadBT, construct the ROM by reducing the corresponding data quadruplets rather than directly reducing the original system. In intrusive settings, these quadruplets are not explicitly constructed, as the low-rank factor \(\hat{Z}_p\) is derived from the matrices \((E, A, B)\) separately, and the low-rank factor \(\hat{Z}_q\) is obtained from the matrices \((E, A, C)\) separately. In other words, the input and output dynamics are approximated independently. However, in non-intrusive settings, the true implicit nature of interpolatory low-rank BT algorithms becomes apparent, revealing that they construct the ROM by reducing the data quadruplets rather than directly reducing the original system.

Before proceeding further, let us make an assumption that \(mn_p = pn_q\), ensuring that the Loewner quadruplets are interpolants of \(G(s)\). This assumption will greatly simplify our discussion, as it allows us to use the terms Loewner quadruplet and interpolant of \(G(s)\) interchangeably. Consequently, we can analyze the Loewner quadruplet using standard interpolation theory.

The following observations can be made regarding interpolatory low-rank BT methods:
\begin{enumerate}
  \item Similar to numerical integration, interpolatory low-rank BT does not reduce \(G(s)\) directly. Instead, an interpolant of \(G(s)\) is first implicitly constructed (or explicitly constructed in non-intrusive settings). This interpolant is not particularly compact, as it is constructed to interpolate \(G(s)\) at several interpolation points to capture the majority of the original system's dynamics. Subsequently, this interpolant acts as a surrogate for \(G(s)\). The ROMs produced by these low-rank BT algorithms are approximations of the interpolants of \(G(s)\), rather than \(G(s)\) itself. In this sense, low-rank BT could be termed ``numerical BT'' if we wish to adopt terminology analogous to numerical integration.
  \item QuadBT and the non-intrusive BT algorithms proposed in this paper are non-intrusive but they perform intrusive MOR on the interpolant of \(G(s)\), for which a state-space realization can be conveniently obtained non-intrusively within the Loewner framework.
  \item In \cite{goseaQuad}, QuadBT was contrasted with the interpolatory Loewner framework, stressing that its associated matrices are diagonally scaled Loewner matrices rather than standard Loewner matrices. It was also asserted that rational interpolation has no role in QuadBT. However, in Subsection \ref{subsec_QuadBT}, our presentation of QuadBT deliberately avoided multiplying the Loewner matrices by $L_p$ and $L_q$ to emphasize that these matrices simply perform a similarity transformation on the Loewner quadruplet without altering its fundamental interpolatory properties. This slight rearrangement of variables reveals that rational interpolation plays a key role in QuadBT, as it supplies QuadBT with a state-space realization constructed non-intrusively from data. This realization is then further reduced using the balancing square-root algorithm, which uses $L_p$ and $L_q$ as similarity transformations to obtain a suitable realization of the interpolant before reducing it. The final ROM is not an interpolant, but that does not mean rational interpolation plays no role in QuadBT or that the Loewner quadruplet differs from the one appearing in interpolation. QuadBT effectively further reduces the same interpolant that appears in the interpolatory Loewner framework.  
  \item Since the ROMs produced by interpolatory low-rank BT are approximations of the interpolants of \(G(s)\), it is unreasonable to expect that reducing the order of the interpolant will result in a final ROM that is more accurate than the interpolant itself. Therefore, the accuracy of the approximation in low-rank BT is directly tied to the quality of the interpolant of \(G(s)\). To ensure that low-rank BT generates ROMs nearly equivalent to those produced by standard BT, the interpolation quality must be exceptional, which heavily relies on the selection of interpolation points. Given that IRKA is regarded as one of the most effective interpolation algorithms, its ROMs should be considered strong candidates for performing low-rank BT. This is supported by \cite{benner2014self}, where IRKA is used to generate effective shifts for the ADI method.
  \item There is some interest within the MOR community to produce BT models through interpolation; see \cite{ionescu2012balancing,kawano2023gramian}. These efforts are primarily focused on constructing exact BT models using interpolation techniques. However, it is important to recognize that, in an approximate sense, low-rank BT algorithms are already producing BT models via interpolation. When we acknowledge the success of ADI-based or Krylov-subspace-based algorithms in extending the applicability of BT to large-scale systems by reducing computational costs, we are indirectly affirming that interpolation at a small number of points may not surpass BT in accuracy. However, if interpolation is performed more liberally, it can achieve sufficient accuracy to compete with BT. Interpolation at a large number of points, while powerful, introduces its own complexities, which will be discussed shortly. Nevertheless, the accuracy and effectiveness of interpolation as a tool in MOR must be acknowledged.
  \item The non-intrusive IRKA algorithms presented in this paper leverage the same principles as low-rank BT. They compute an interpolant of \(G(s)\) by interpolating at several points to capture the majority of the dynamics of \(G(s)\). This interpolant then serves as a surrogate for \(G(s)\), allowing the algorithms to sample the interpolant as IRKA updates the interpolation points, rather than directly sampling \(G(s)\). This approach enables the non-intrusive IRKA algorithms to bypass the need for new experiments to obtain additional samples of \(G(s)\).
\end{enumerate}
Having established that all interpolatory low-rank BT algorithms essentially reduce their respective data quadruplets, one might consider directly applying standard MOR algorithms like BT and IRKA to these quadruplets to obtain a compact ROM. However, these quadruplets are often not as well-behaved as desired. In many cases, when constructing an interpolant in the Loewner framework with a large number of interpolation points, the resulting interpolant is an unstable system with several poles in the right-half plane \cite{mayo2007framework,mao2024data}. As a result, standard MOR algorithms that require a stable original model cannot be directly applied to reduce the size of these quadruplets. Additionally, the Loewner matrix \(W^T E V\) tends to become singular as the number of interpolation points increases \cite{mayo2007framework,mao2024data}, rendering MOR algorithms that assume the non-singularity of the \(E\)-matrix unsuitable for directly reducing the order of Loewner interpolants. QuadBT and the algorithms proposed in this paper can be viewed as ``compression'' algorithms, designed to extract a compact ROM from these quadruplets. Moreover, these algorithms can also be seen as ``distillation'' algorithms, as they can extract ROMs with various properties from the same ``raw'' quadruplet by processing it differently. They effectively distill a compact, useful, and well-behaved ROM from the raw data quadruplets, which cannot be directly handled by standard MOR algorithms that assume the original model is well-behaved (like stable and minimal).
\section{Numerical Examples}\label{sec10}
This section evaluates the performance of the proposed non-intrusive algorithms by comparing them to their intrusive counterparts. The effectiveness of these algorithms is illustrated through four numerical examples: the first two involve continuous-time systems, while the remaining two focus on discrete-time cases.
\subsection{Experimental Setup}\label{subex_1}
For quadrature-based algorithms, QuadBT \cite{goseaQuad} is first used to generate ROMs. The proposed quadrature-based IRKA algorithms then compress and distill the same quadruplets produced by QuadBT to target a local optimum of $||G(s)-G_r(s)||_{\mathcal{H}_2}^2$. To ensure a fair comparison between NI-ADI-BT and frequency-domain QuadBT, the ADI shifts are chosen using (\ref{adi_shifts_cont}) with \(\zeta_{i,\sigma} = \zeta_{i,\mu} = 10^{-4}\), so that both methods compress nearly identical quadruplets. Likewise, NI-ADI-BT and PORK-IRKA compress the same quadruplet, with \(L_p\) and \(L_q\) in NI-ADI-BT, and \(Q_s^{-1}\) and \(P_s^{-1}\) in PORK-IRKA, computed via (\ref{qs_formula})–(\ref{lq_formula}).

For discrete-time systems, the procedure is similar: QuadBT \cite{goseaQuad} generates initial ROMs, and the proposed quadrature-based DT-IRKA algorithms compress and distill the same quadruplet to target a local optimum of $||G(z)-G_r(z)||_{\mathcal{H}_2}^2$. To ensure a fair comparison between discrete-time QuadBT and NI-PORK-DTBT, interpolation points are selected using (\ref{pork_shifts_disc}) with \(\zeta_{i,\sigma} = \zeta_{i,\mu} = 10^{-4}\), ensuring both methods act on nearly identical quadruplets. In this case, \(L_p\) and \(L_q\) in NI-PORK-DTBT, and \(\bar{Q}_s^{-1}\) and \(\bar{P}_s^{-1}\) in PORK-DTIRKA, are obtained from (\ref{dt_qs_formula})–(\ref{dt_lq_formula}).

All IRKA and DT-IRKA-based algorithms are initialized arbitrarily and stopped after 50 iterations if convergence is not achieved. All experiments are conducted in MATLAB R2021b on a Windows 11 laptop with a 2GHz Intel Core i7 processor and 16GB of RAM. The MATLAB codes for reproducing the results in this section are provided in \cite{mycode}. 
\subsection{CD Player}
The CD Player model is a \(120^{\text{th}}\)-order system with 2 inputs and 2 outputs, taken from the benchmark collection in \cite{chahlaoui2005benchmark} commonly used for evaluating MOR algorithms. For frequency-domain QuadBT, 300 logarithmically-spaced quadrature nodes and weights are generated over the frequency range \(10^{-3}\) to \(10^3\) rad/sec using the exponential trapezoidal rule, which is the preferred numerical quadrature method in \cite{goseaQuad} for achieving high accuracy. These nodes are used to approximate both the controllability and observability Gramians. The corresponding transfer function samples are computed from the state-space realization of the CD Player model provided in \cite{chahlaoui2005benchmark}. For time-limited QuadBT, 400 uniformly-spaced quadrature nodes and weights are computed over the time interval \([0, 40]\) seconds using the Gauss–Legendre quadrature rule. Impulse response samples are generated using the same state-space model. The respective quadruplets are then constructed and used to execute QuadBT. Subsequently, 300 ADI shifts are generated as described in Subsection \ref{subex_1}, after which transfer function samples are evaluated and the corresponding quadruplet is constructed.

Figure~\ref{fig1} shows the largest 20 Hankel singular values approximated by QuadBT and NI-ADI-BT.
\begin{figure}[!h]
  \centering
  \includegraphics[width=10cm]{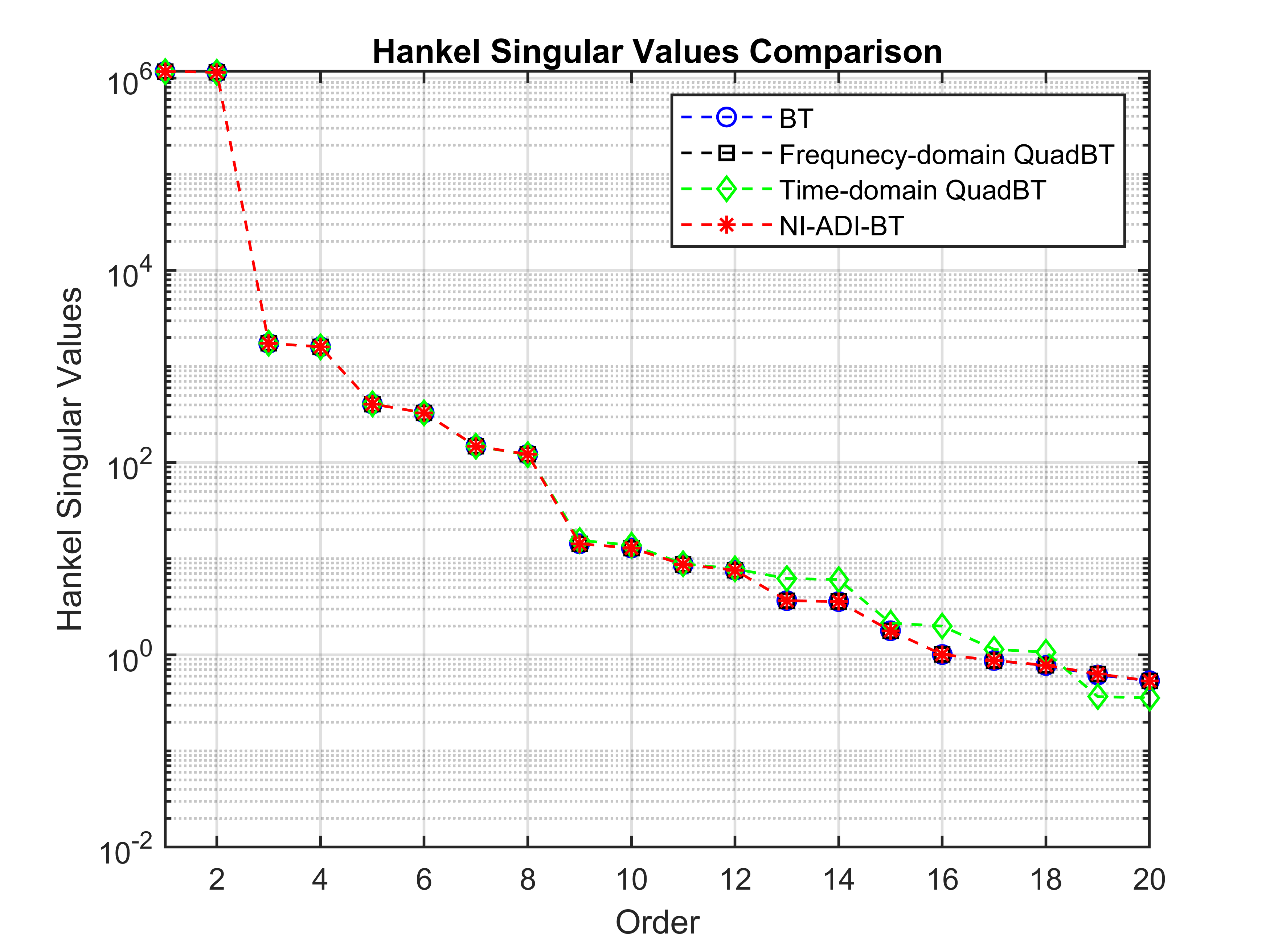}
  \caption{Comparison of Hankel singular values}\label{fig1}
\end{figure} As shown, NI-ADI-BT provides a close approximation to the Hankel singular values of the original system. Figure~\ref{fig2} presents the \(\mathcal{H}_\infty\) norm of the relative error \(\frac{||G(s) - G_r(s)||_{\mathcal{H}_\infty}}{||G(s)||_{\mathcal{H}_\infty}}\) for ROMs of orders 1 through 20. The results indicate that NI-ADI-BT achieves accuracy comparable to that of intrusive BT and QuadBT.
\begin{figure}[!h]
  \centering
  \includegraphics[width=10cm]{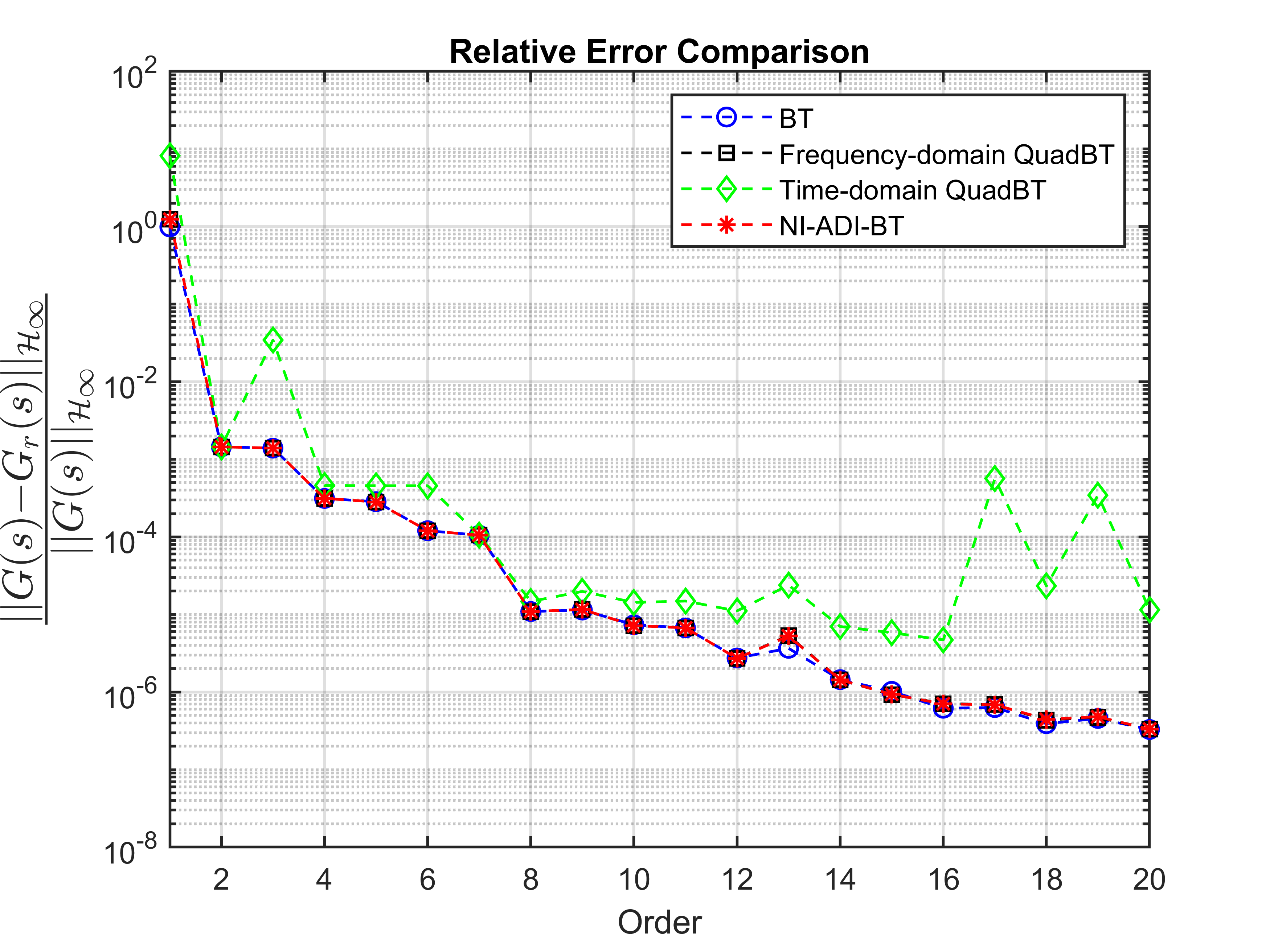}
  \caption{Comparison of relative error $\frac{||G(s)-G_r(s)||_{\mathcal{H}_\infty}}{||G(s)||_{\mathcal{H}_\infty}}$}\label{fig2}
\end{figure}

Using the same respective quadruplets, FD-Quad-IRKA, TD-Quad-IRKA, and PORK-IRKA are applied to compute an $8^{th}$-order ROM. In FD-Quad-IRKA, the weights are obtained using the trapezoidal rule applied to the same nodes used by QuadBT. Figure~\ref{fig3} shows the frequency response of the original system \(G(s)\) (input 1 and output 1) and the ROMs \(G_r(s)\) (input 1 and output 1) produced by IRKA, FD-Quad-IRKA, TD-Quad-IRKA, and PORK-IRKA. The results show that the proposed quadrature-based IRKA methods offer accuracy comparable to that of IRKA. For brevity, only the frequency response of the first input-output channel is shown.
\begin{figure}[!h]
  \centering
  \includegraphics[width=10cm]{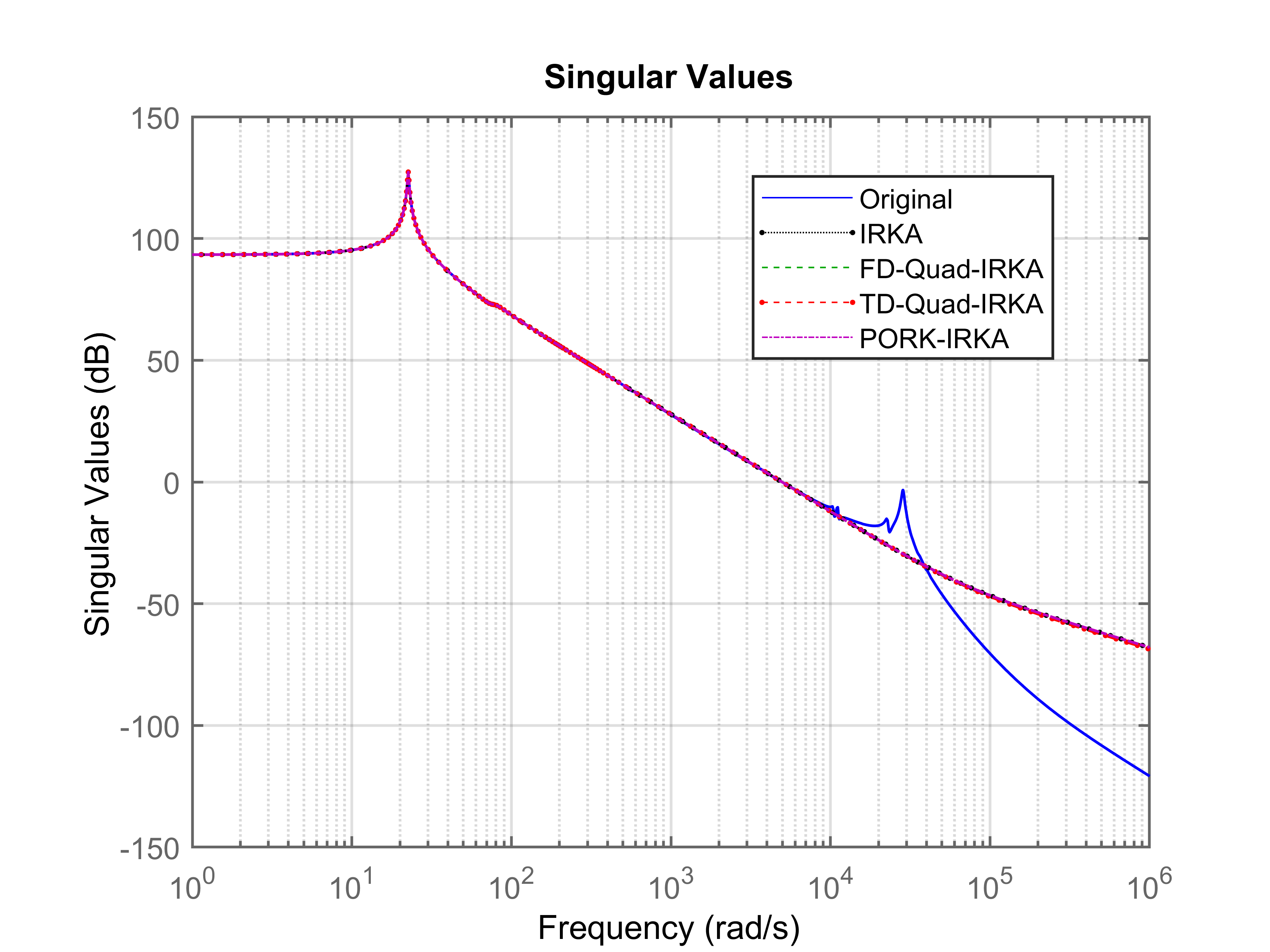}
  \caption{Frequency responses of $G(s)$ and $G_r(s)$}\label{fig3}
\end{figure}
\subsection{International Space Station}
The International Space Station model is a \(270^{\text{th}}\)-order system with 3 inputs and 3 outputs, taken from the benchmark collection in \cite{chahlaoui2005benchmark}. For frequency-domain QuadBT, 400 logarithmically-spaced quadrature nodes and weights are generated over the frequency range \(10^{-1}\) to \(10^2\) rad/sec using the exponential trapezoidal rule. These are used to approximate both the controllability and observability Gramians. The corresponding transfer function samples are computed from the state-space realization of the model provided in \cite{chahlaoui2005benchmark}. For time-limited QuadBT, 400 uniformly-spaced quadrature nodes and weights are computed over the interval \([0, 40]\) seconds using the Gauss–Legendre quadrature rule. Impulse response samples are generated using the same state-space model. The resulting quadruplets are then constructed and used to run QuadBT. Afterward, 400 ADI shifts are generated as described in Subsection \ref{subex_1}, transfer function samples are evaluated, and the associated quadruplet is constructed.

Figure~\ref{fig4} shows the 30 largest Hankel singular values approximated by QuadBT and NI-ADI-BT.
\begin{figure}[!h]
  \centering
  \includegraphics[width=10cm]{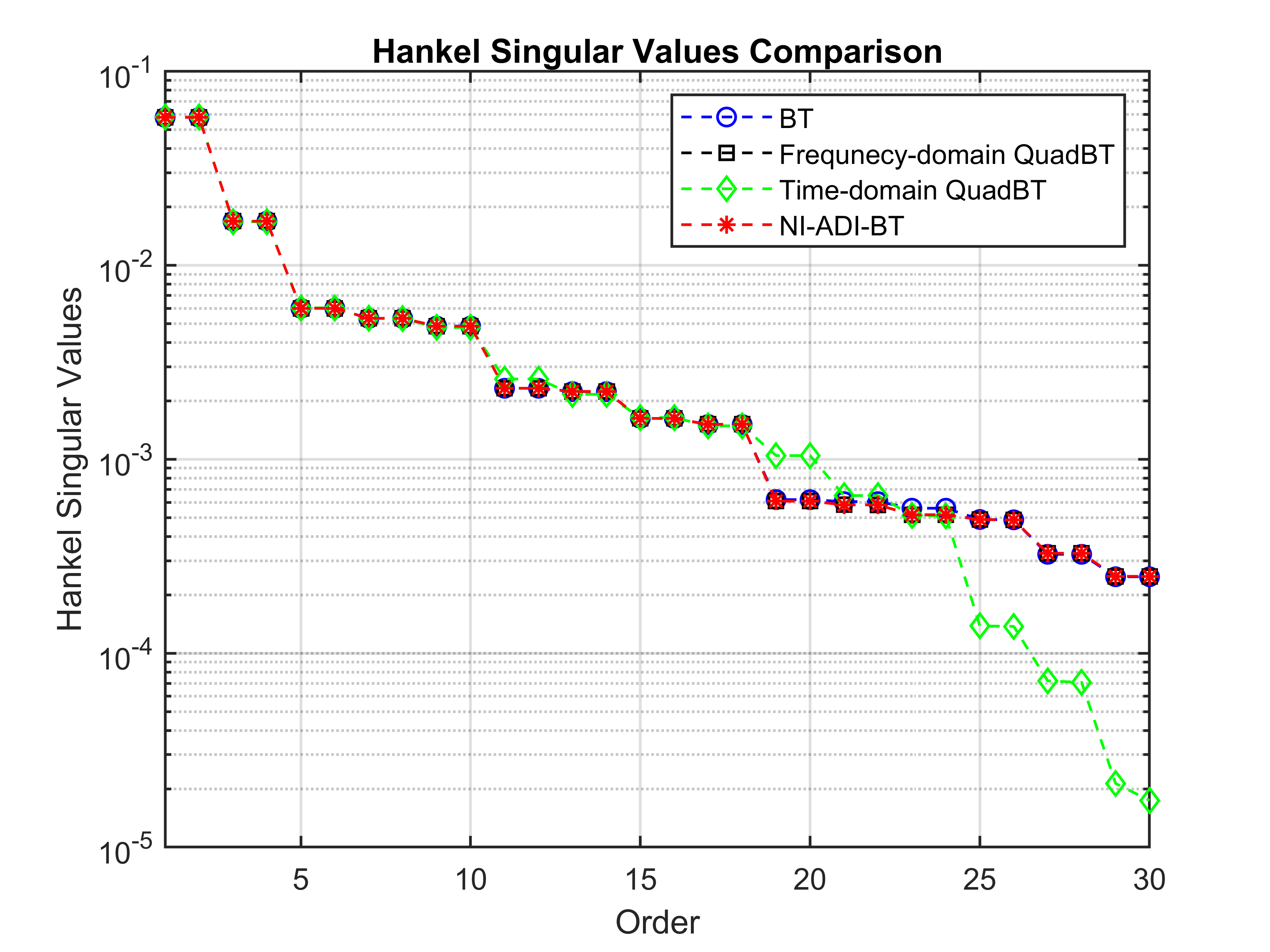}
  \caption{Comparison of Hankel singular values}\label{fig4}
\end{figure} As shown, NI-ADI-BT closely matches the Hankel singular values of the original system. Figure~\ref{fig5} presents the \(\mathcal{H}_\infty\) norm of the relative error \(\frac{||G(s) - G_r(s)||_{\mathcal{H}_\infty}}{||G(s)||_{\mathcal{H}_\infty}}\) for ROMs of orders 1 through 30. The results demonstrate that NI-ADI-BT attains accuracy comparable to that of QuadBT.
\begin{figure}[!h]
  \centering
  \includegraphics[width=10cm]{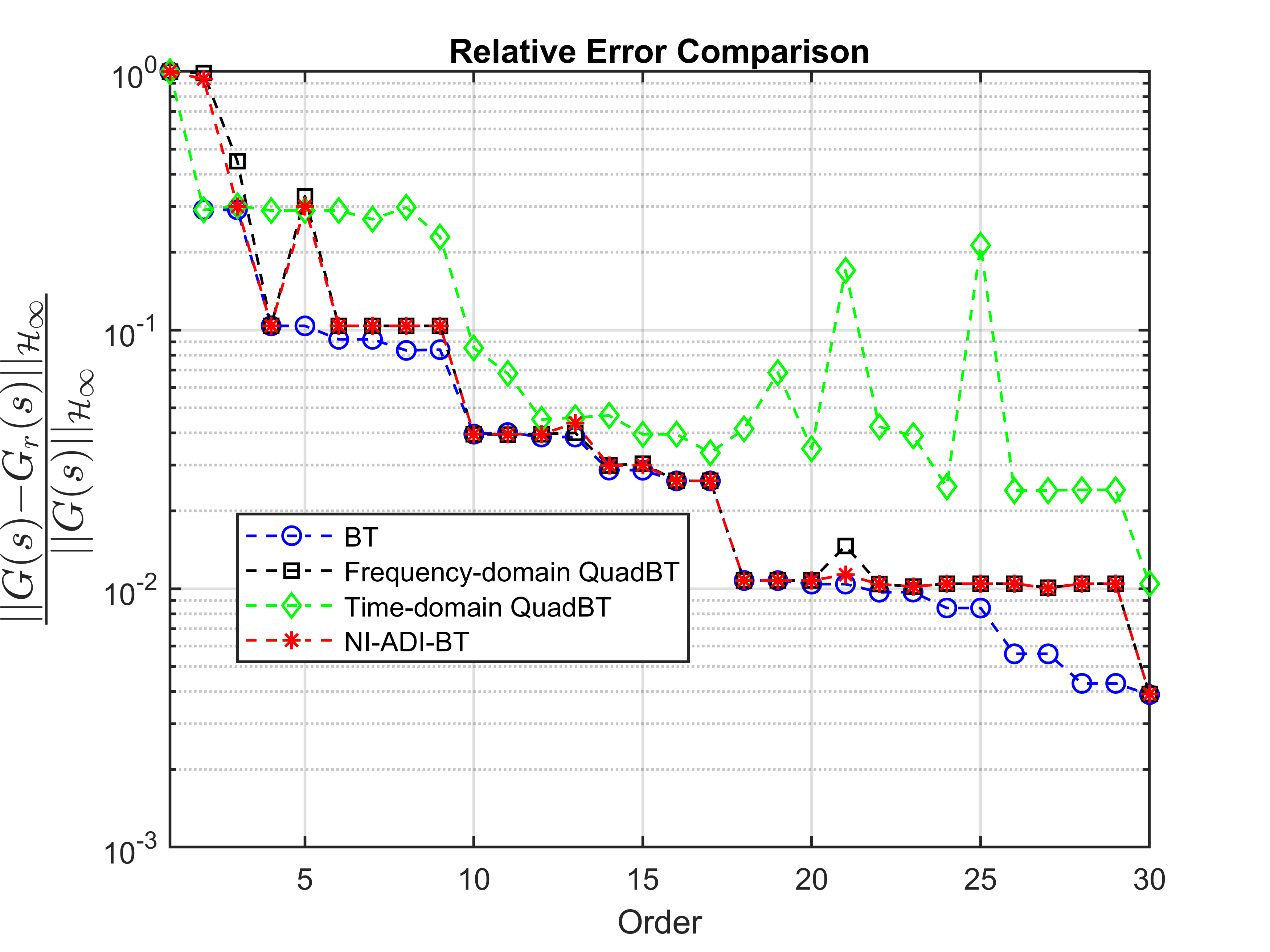}
  \caption{Comparison of relative error $\frac{||G(s)-G_r(s)||_{\mathcal{H}_\infty}}{||G(s)||_{\mathcal{H}_\infty}}$}\label{fig5}
\end{figure}

Using the same respective quadruplets, FD-Quad-IRKA, TD-Quad-IRKA, and PORK-IRKA are used to compute a \(12^{\text{th}}\)-order ROM. In FD-Quad-IRKA, the weights are computed using the trapezoidal rule applied to the same nodes as those used by QuadBT. Figure~\ref{fig6} shows the frequency response of the original system \(G(s)\) (input 1 and output 1) and the ROMs \(G_r(s)\) (input 1 and output 1) produced by IRKA, FD-Quad-IRKA, TD-Quad-IRKA, and PORK-IRKA. The plots demonstrate that the proposed quadrature-based IRKA methods achieve accuracy comparable to that of IRKA. For brevity, only the first input-output channel is shown.
\begin{figure}[!h]
  \centering
  \includegraphics[width=10cm]{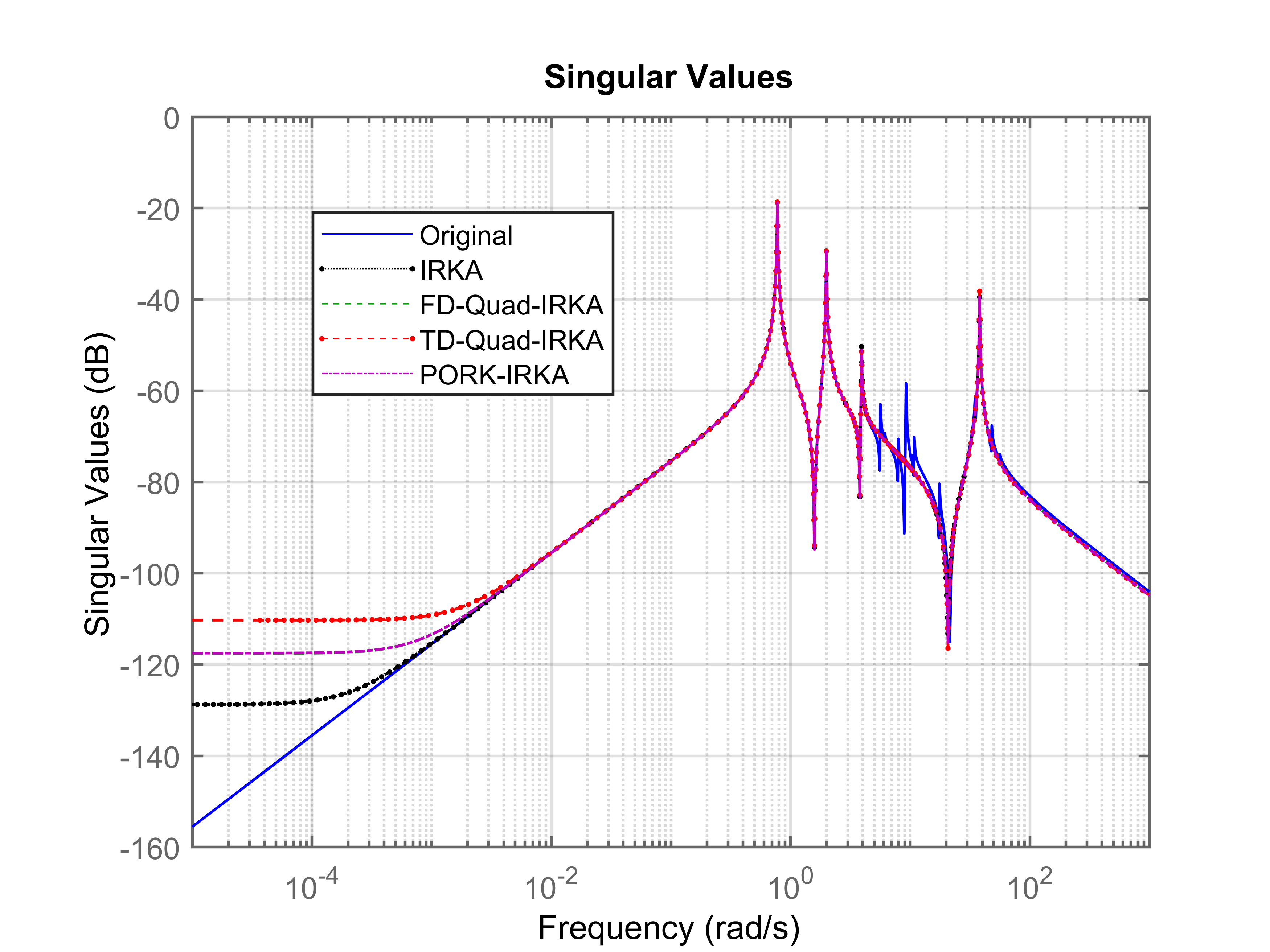}
  \caption{Frequency responses of $G(s)$ and $G_r(s)$}\label{fig6}
\end{figure}
\subsection{Low-pass Butterworth Filter}
In this example, we consider the model from \cite{goseaQuad}, which is a \(40^{\text{th}}\)-order low-pass Butterworth filter with a cutoff frequency of \(0.6\) rad/sec. For frequency-domain QuadBT, 100 uniformly-spaced Gauss–Legendre quadrature nodes and weights are generated over the frequency range \(-\pi\) to \(\pi\) rad/sec. These are used to approximate both the controllability and observability Gramians. Transfer function samples at the quadrature nodes are obtained from the state-space realization of the Butterworth filter, generated using MATLAB’s `\textit{butter}` command. For time-limited QuadBT, 100 impulse response samples are generated from the same state-space model. The corresponding quadruplets are then constructed and used to perform QuadBT. Next, 100 PORK shifts are computed as described in Subsection \ref{subex_1}, followed by the evaluation of transfer function samples and construction of the associated quadruplet.

Figure~\ref{fig7} presents the 20 largest Hankel singular values approximated by QuadBT and NI-PORK-DTBT.
\begin{figure}[!h]
  \centering
  \includegraphics[width=10cm]{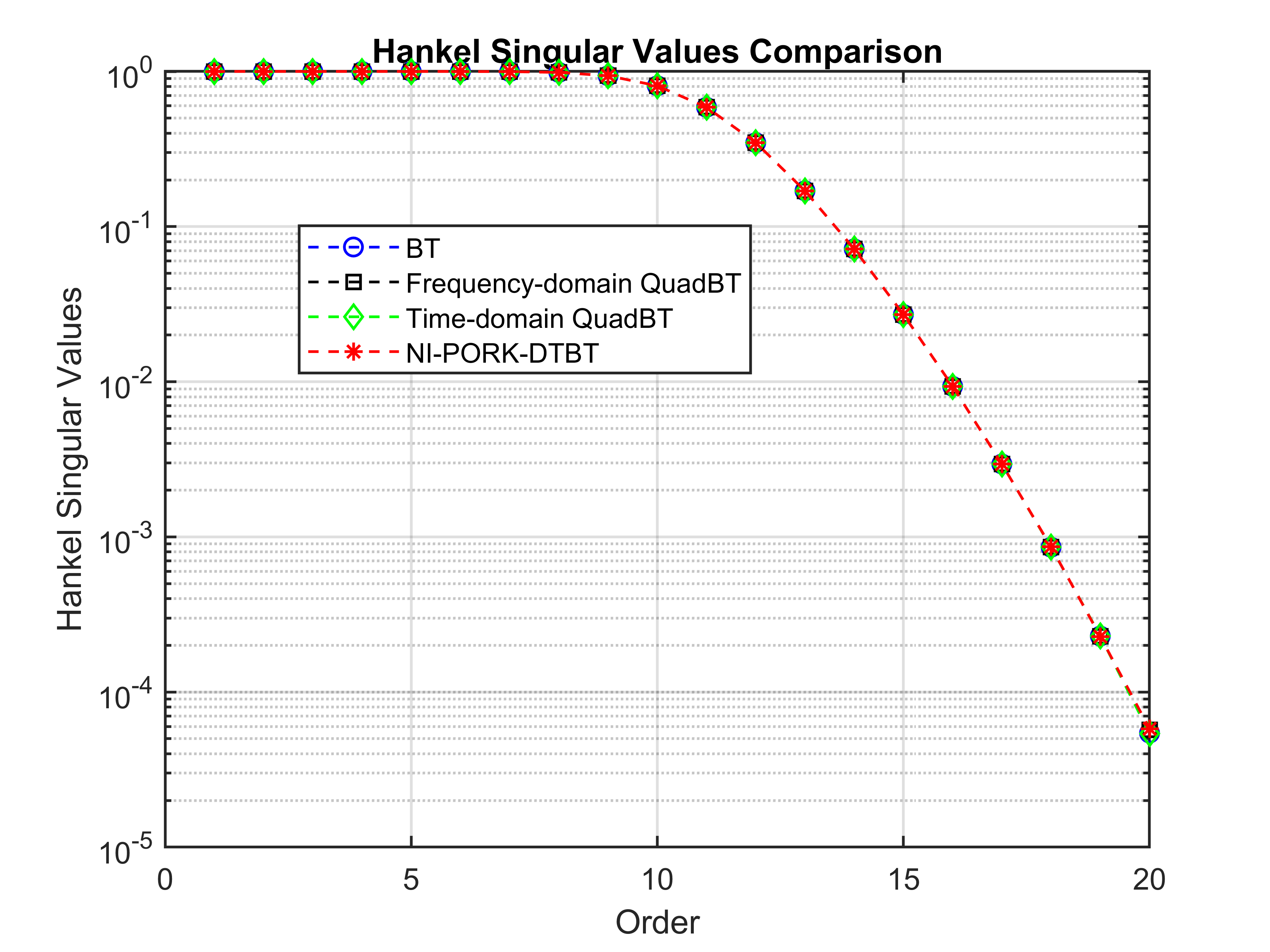}
  \caption{Comparison of Hankel singular values}\label{fig7}
\end{figure} As shown, NI-PORK-DTBT closely matches all 20 Hankel singular values. Figure~\ref{fig8} shows the \(\mathcal{H}_\infty\) norm of the relative error \(\frac{||G(z) - G_r(z)||_{\mathcal{H}_\infty}}{||G(z)||_{\mathcal{H}_\infty}}\) for ROMs of orders 1 through 20. The results indicate that NI-PORK-DTBT provides accuracy comparable to that of QuadBT.
\begin{figure}[!h]
  \centering
  \includegraphics[width=10cm]{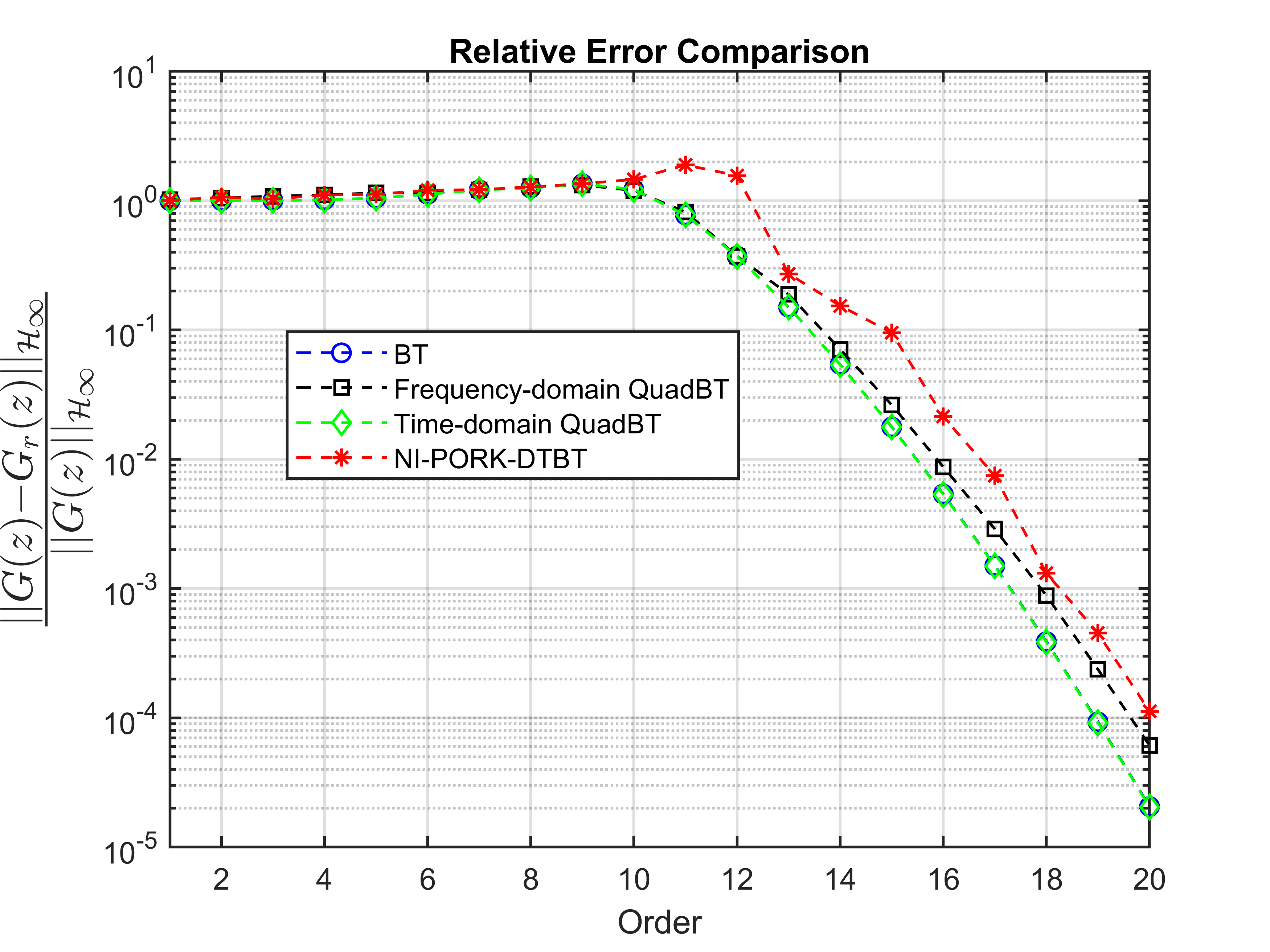}
  \caption{Comparison of relative error $\frac{||G(z)-G_r(z)||_{\mathcal{H}_\infty}}{||G(z)||_{\mathcal{H}_\infty}}$}\label{fig8}
\end{figure}

Using the same quadruplets, FD-Quad-DTIRKA, TD-DTIRKA, and PORK-DTIRKA are applied to compute a \(15^{\text{th}}\)-order ROM. Figure~\ref{fig9} displays the frequency response of \(G(z)\) and the ROMs \(G_r(z)\) produced by DT-IRKA, FD-Quad-DTIRKA, TD-DTIRKA, and PORK-DTIRKA. The plots show that the proposed methods achieve accuracy on par with DT-IRKA.
\begin{figure}[!h]
  \centering
  \includegraphics[width=10cm]{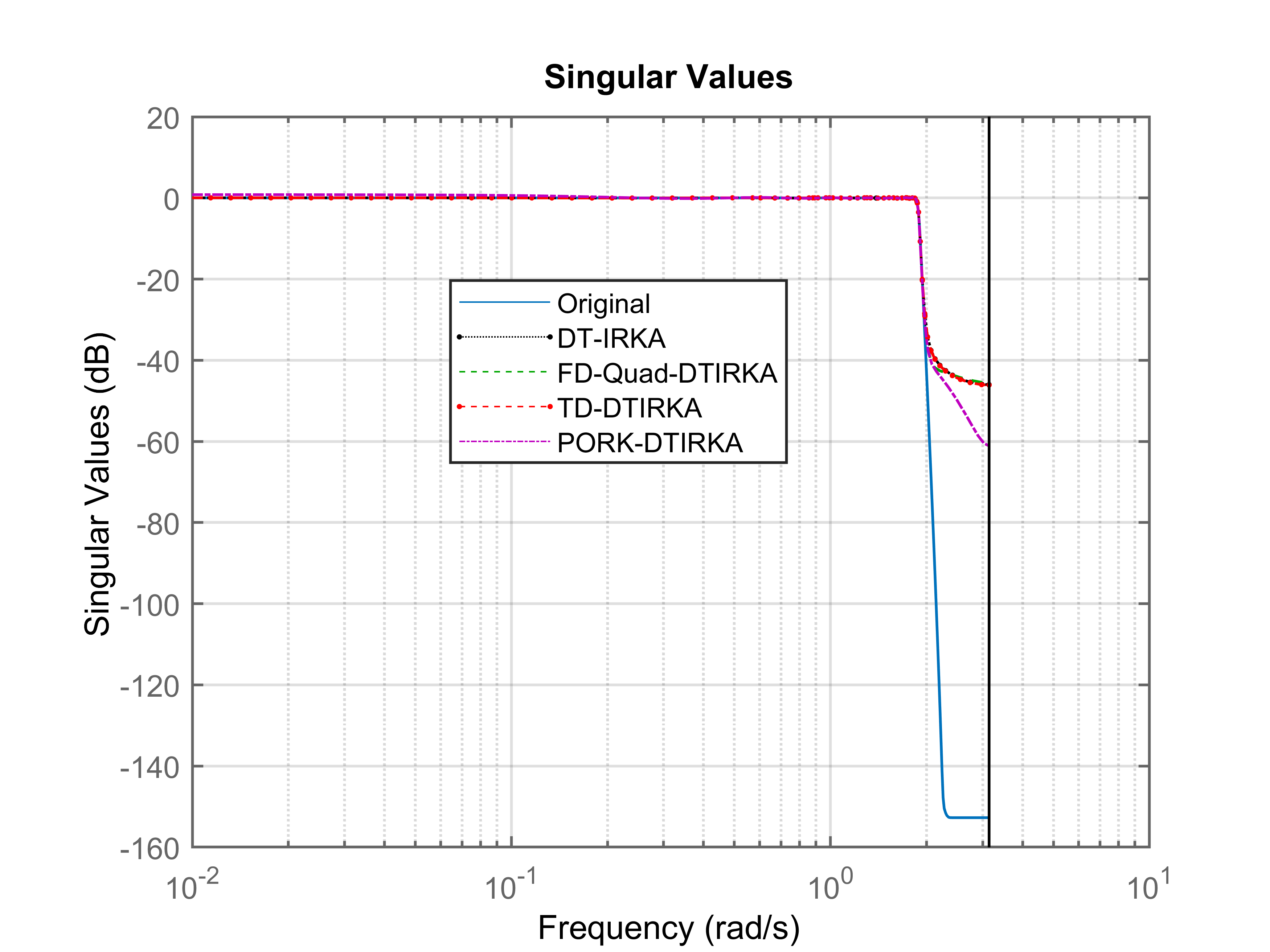}
  \caption{Frequency responses of $G(z)$ and $G_r(z)$}\label{fig9}
\end{figure}
\subsection{Heat Transfer Equation}
The Heat Transfer Equation model is a \(200^{\text{th}}\)-order system, taken from the benchmark collection in \cite{chahlaoui2005benchmark}. For frequency-domain QuadBT, 100 uniformly-spaced Gauss–Legendre quadrature nodes and weights are generated over the frequency range \(-\pi\) to \(\pi\) rad/sec. These are used to approximate both the controllability and observability Gramians. Transfer function samples at the quadrature nodes are obtained from the state-space realization provided in \cite{chahlaoui2005benchmark}. For time-limited QuadBT, 100 impulse response samples are generated from the same state-space model. The corresponding quadruplets are then constructed and used to perform QuadBT. Next, 100 PORK shifts are computed as described in Subsection \ref{subex_1}, followed by the evaluation of transfer function samples and construction of the associated quadruplet.

Figure~\ref{fig10} presents the 30 largest Hankel singular values approximated by QuadBT and NI-PORK-DTBT.
\begin{figure}[!h]
  \centering
  \includegraphics[width=10cm]{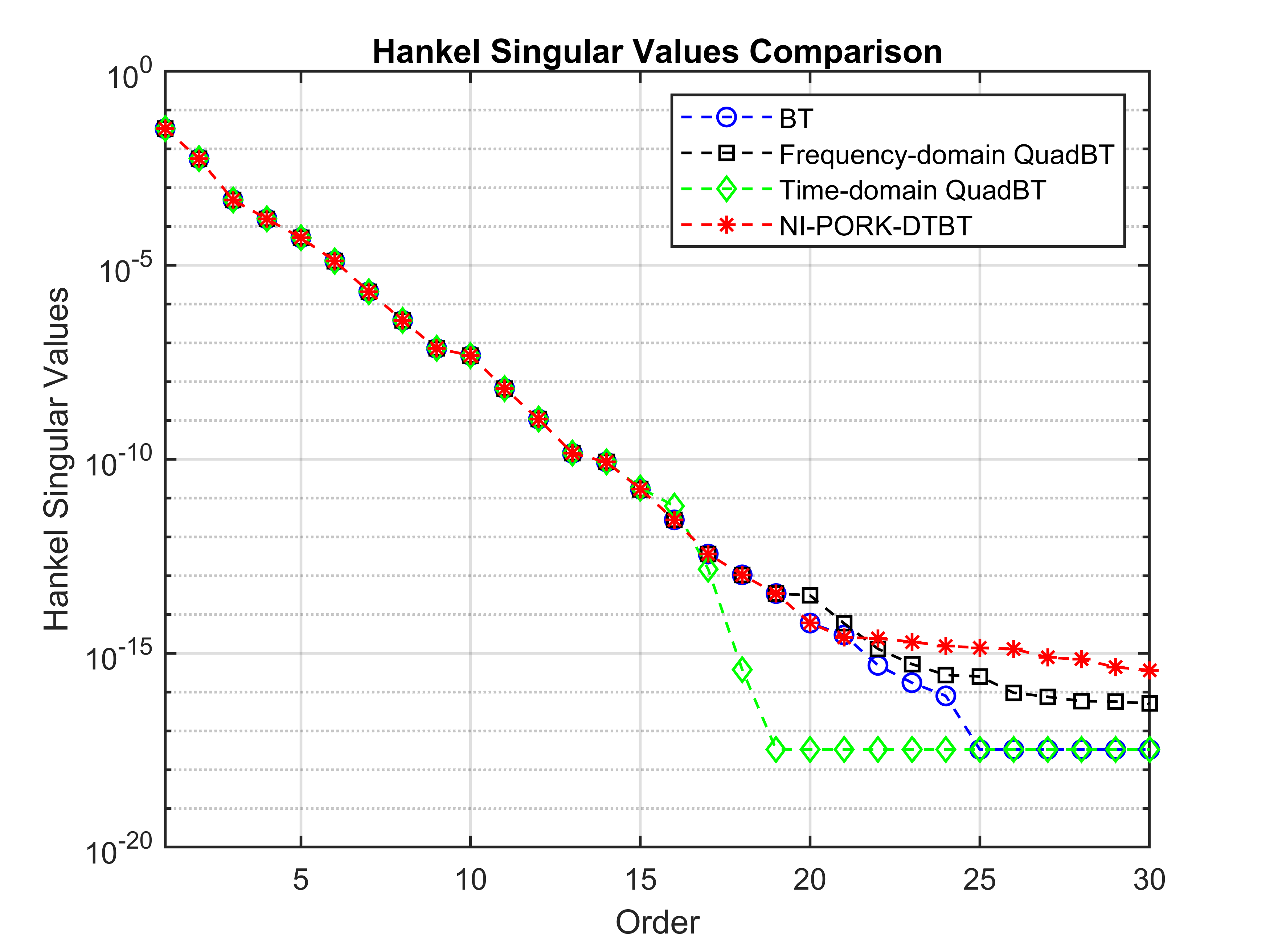}
  \caption{Comparison of Hankel singular values}\label{fig10}
\end{figure} As shown, NI-PORK-DTBT closely most of the Hankel singular values. Figure~\ref{fig11} shows the \(\mathcal{H}_\infty\) norm of the relative error \(\frac{||G(z) - G_r(z)||_{\mathcal{H}_\infty}}{||G(z)||_{\mathcal{H}_\infty}}\) for ROMs of orders 1 through 30. The results indicate that NI-PORK-DTBT provides accuracy comparable to that of QuadBT.
\begin{figure}[!h]
  \centering
  \includegraphics[width=10cm]{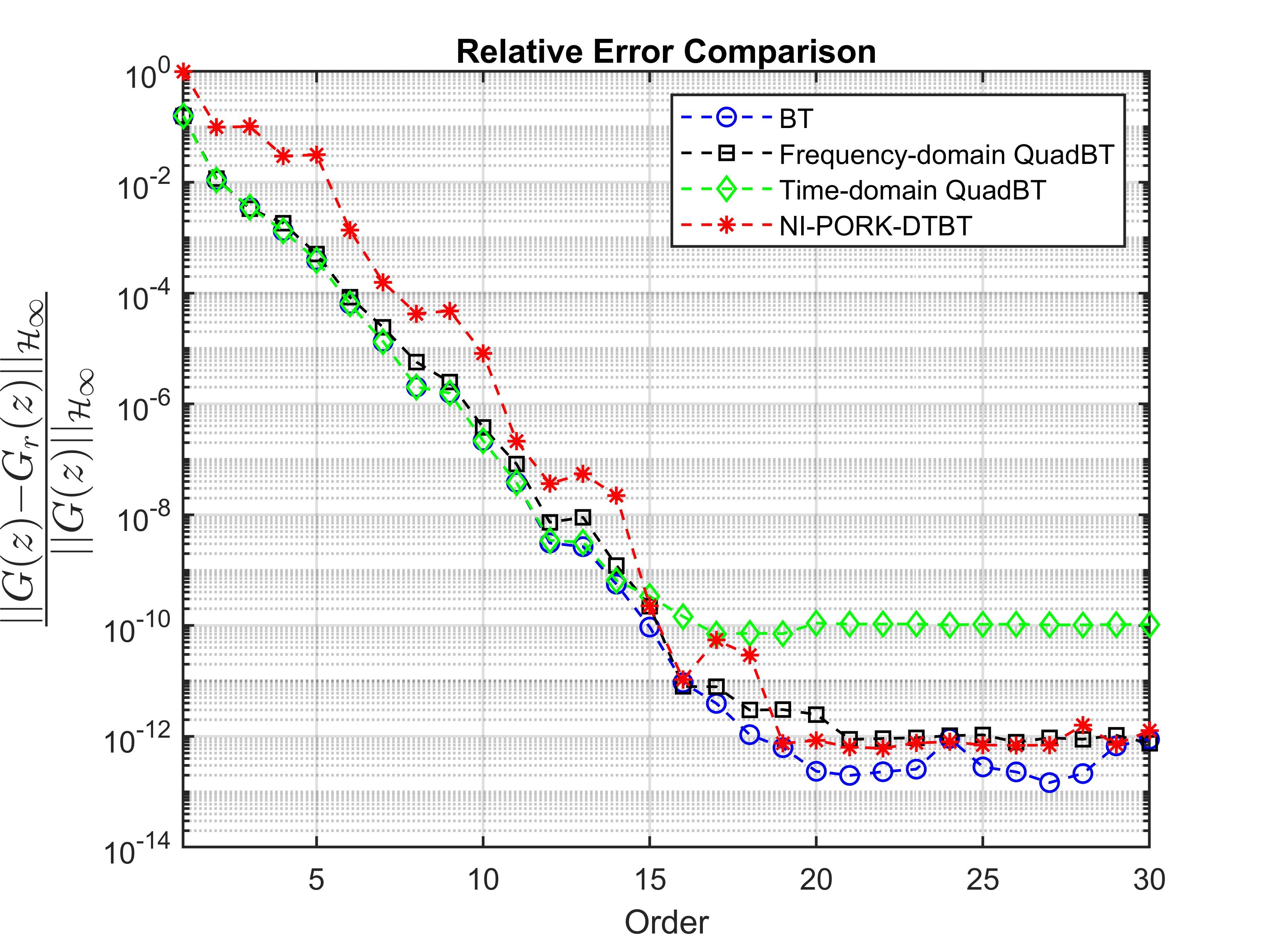}
  \caption{Comparison of relative error $\frac{||G(z)-G_r(z)||_{\mathcal{H}_\infty}}{||G(z)||_{\mathcal{H}_\infty}}$}\label{fig11}
\end{figure}

Using the same quadruplets, FD-Quad-DTIRKA, TD-DTIRKA, and PORK-DTIRKA are applied to compute a \(10^{\text{th}}\)-order ROM. Figure~\ref{fig12} displays the frequency response of \(G(z)\) and the ROMs \(G_r(z)\) produced by DT-IRKA, FD-Quad-DTIRKA, TD-DTIRKA, and PORK-DTIRKA. The plots show that the proposed methods achieve accuracy on par with DT-IRKA.
\begin{figure}[!h]
  \centering
  \includegraphics[width=10cm]{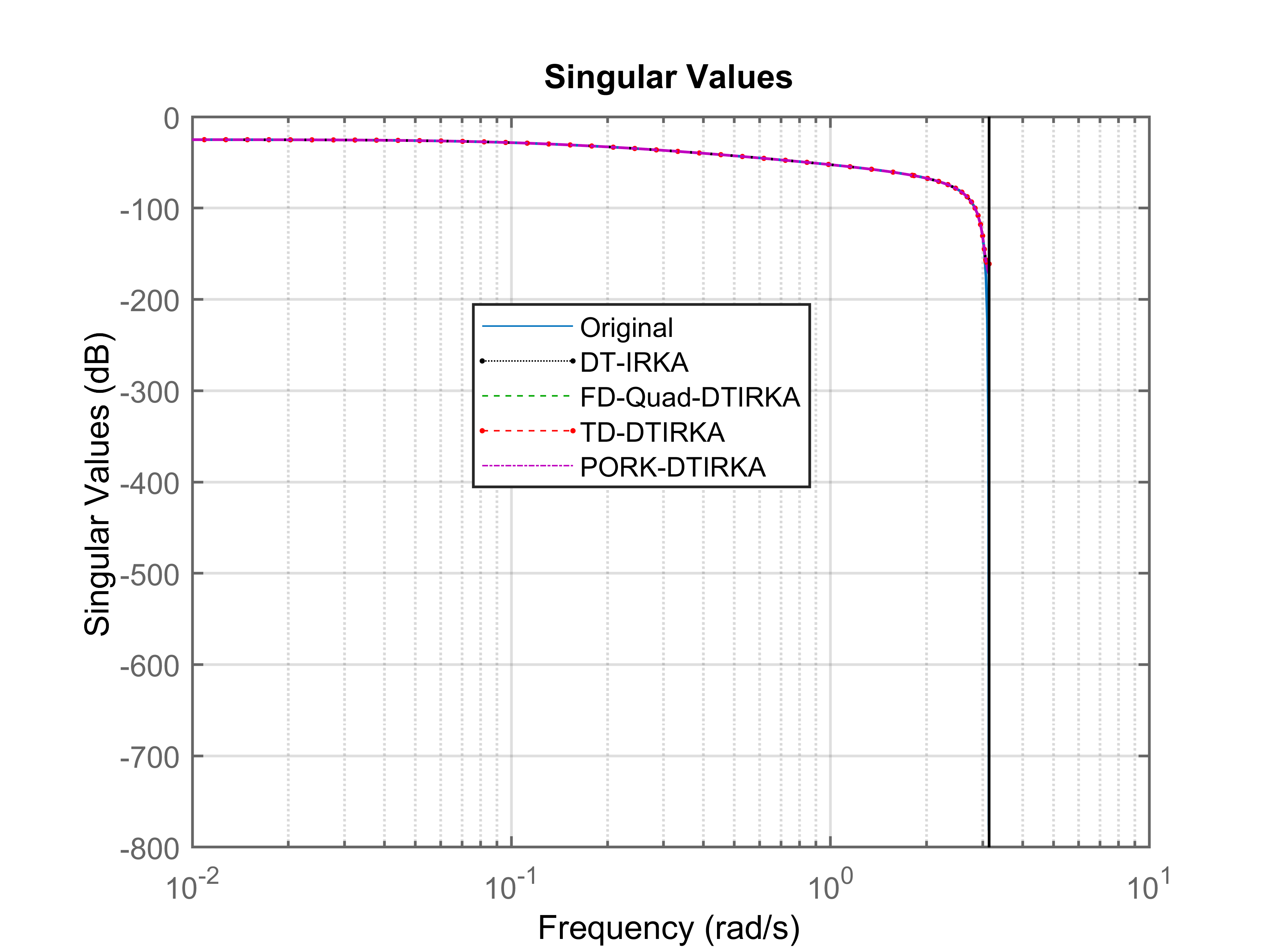}
  \caption{Frequency responses of $G(z)$ and $G_r(z)$}\label{fig12}
\end{figure}
\section{Conclusion}\label{sec11}
This paper presents non-intrusive implementations of BT and IRKA for both continuous-time and discrete-time systems. The proposed methods utilize available frequency or time-domain data to compute ROMs. It has been observed that both QuadBT and the algorithms introduced in this paper effectively compress and distill their respective raw quadruplets, resulting in compact and practical ROMs. Numerical experiments demonstrate that the proposed algorithms perform comparably to their intrusive counterparts.
\section*{Acknowledgements and Funding}                               
We are deeply grateful to Ion Victor Gosea at the Max Planck Institute for Dynamics of Complex Technical Systems in Magdeburg, Germany, for his patient responses to our numerous questions about the Loewner framework and for his valuable feedback that helped improve our manuscript. This work was supported by the National Natural Science Foundation of China under Grants No. 62350410484, 62273059, 62376146, and 62176001, and in part by the Shanxi Province Central Guidance for Local Science and Technology Development Special Project (Grant No. YDZJSX20231D003).

\end{document}